\newcommand{\enorm}[1]{\norm{#1}_{\mathrm{2}}}      
\newcommand{\snorm}[1]{\norm{#1}_{\mathrm {\infty}}}    
\newcommand{\set}[1]{{\left\{#1\right\}}}    
\newcommand{\ve}[1]{\mathbf{#1}}
\newcommand{\complex}{{\mathbb C}}
\newcommand{\reals}{{\mathbb R}}
\newcommand{\F}{F_{\alpha,\beta,\gamma}}
\newcommand{\G}{\Gamma_{\alpha,\beta,\gamma}}
\newcommand{\class}[1]{\textup{#1}}
\newcommand{\maxsat}{Max-SAT}
\def\ket#1{ | #1 \rangle}
\def\bra#1{{\langle #1 | }}
\newcommand{\spa}[1]{\mathcal{#1}}
\newcommand{\herm}[1]{\mathcal{H}(#1)}
\newcommand{\dens}[1]{\mathcal{D}(#1)}
\newcommand{\sep}{\rm SEP}
\newcommand{\opt}{{\rm OPT}}
\newcommand{\optprod}{{\rm OPT_{prod}}}
\newcommand{\rhoprod}{\rho_{\rm prod}}
\newcommand{\lmax}{\lambda_{\max}}
\newcommand{\lmin}{\lambda_{\min}}
\newtheorem*{rep@theorem}{\rep@title}
\newcommand{\newreptheorem}[2]{%
\newenvironment{rep#1}[1]{%
 \def\rep@title{#2 \ref{##1}}%
 \begin{rep@theorem}}%
 {\end{rep@theorem}}}
\title{Almost optimal classical approximation algorithms for a quantum generalization of Max-Cut}
\titlerunning{Approximation algorithms for a quantum generalization of Max-Cut}
\author{Sevag Gharibian}{University of Paderborn, Germany \and Virginia Commonwealth University, Richmond, VA, USA }{sevag.gharibian@upb.de}{}{NSF grants CCF-1526189 and CCF-1617710.}
\author{Ojas Parekh}{Sandia National Laboratories, Albuquerque, New Mexico, USA}{odparek@sandia.gov}{}{Laboratory Directed Research and Development program at Sandia National Laboratories, a multimission laboratory managed and operated by National Technology and Engineering Solutions of Sandia, LLC., a wholly owned subsidiary of Honeywell International, Inc., for the U.S. Department of Energy's National Nuclear Security Administration under contract DE-NA-0003525. Also supported by the U.S. Department of Energy, Office of Science, Office of Advanced Scientific Computing Research, Quantum Algorithms Teams program.}
\authorrunning{S.\,Gharibian and O.\, Parekh}
\keywords{Approximation algorithm, Max-Cut, local Hamiltonian, QMA-hard, Heisenberg model, product state}
\begin{document}

\maketitle

\begin{abstract}
    Approximation algorithms for constraint satisfaction problems (CSPs) are a central direction of study in theoretical computer science. In this work, we study classical product state approximation algorithms for a physically motivated quantum generalization of Max-Cut, known as the quantum Heisenberg model. This model is notoriously difficult to solve exactly, even on bipartite graphs, in stark contrast to the classical setting of Max-Cut. Here we show, for any interaction graph, how to classically and efficiently obtain approximation ratios $0.649$ (anti-ferromagnetic XY model) and $0.498$ (anti-ferromagnetic Heisenberg XYZ model). These are almost optimal; we show that the best possible ratios achievable by a product state for these models is $2/3$ and $1/2$, respectively.
\end{abstract}

\section{Introduction}\label{scn:intro}

The study of approximation algorithms for NP-complete problems is a central area of research in theoretical computer science (see, e.g.,~\cite{H97,V01}). Indeed, the field has seen breakthroughs such as the celebrated Goemans-Williamson~\cite{GW95} 0.878-approximation algorithm for Max-Cut, and the PCP theorem~\cite{AS98,ALMSS98}, which yielded a general framework for showing hardness of approximation results. Here, an approximation algorithm $A$ with ratio $0<r<1$ is defined as follows: Given an instance $\Pi$ of a maximization problem with optimal value $\opt$, $A$ runs in polynomial time and outputs a value $\widetilde{\opt}$ satisfying $r\opt\leq \widetilde{\opt}\leq \opt$. Focal points of study in approximation algorithms are Boolean constraint satisfaction problems (CSPs) such as \maxsat\ and Max-Cut, in which one is roughly given a set of local constraints acting on $k\in O(1)$ bits each (out of a total of $n$ bits), and asked to compute the largest number of constraints which are simultaneously satisfiable.

In the quantum setting, CSPs are naturally generalized by the \emph{$k$-local Hamiltonian problem ($k$-LH)}~\cite{KSV02}. In the latter, one is given as input an exponentially large (in the number of qubits, $n$) Hermitian matrix $H$ known as a \emph{local Hamiltonian}, which has a succinct description in terms of local ``quantum clauses.'' The goal is to estimate the smallest eigenvalue of $H$, $\lmin(H)$, i.e. the \emph{ground state energy} of $H$. Slightly more formally, a $k$-local Hamiltonian $H=\sum_{S\subseteq[n]}H_S$ acts on $n$ qubits in total, with each local quantum ``clause'' $H_S$ acting on a constant number $k$ of qubits denoted by subset $S\subseteq[n]$ with $\abs{S}=k$. (Thus, each $H_S$ is a $2^k\times 2^k$ Hermitian matrix. Note that formally, $H_S$ implicitly denotes operator $I_{[n]\setminus S}\otimes H_S$; this ensures dimensions match in the sum over clauses.)  Quantum CSPs in which the matrices $H_S$ are diagonal correspond to classical CSPs.

The problem $k$-LH is not only physically motivated (it is the problem of estimating the energy of a quantum many-body system when cooled to near absolute zero), but also complexity theoretically --- it was the first known QMA-complete problem~\cite{KSV02}, where Quantum Merlin Arthur (QMA) is the quantum analogue of NP. As such, $k$-LH has been a central problem of study in the field of {Quantum Hamiltonian Complexity} (see, e.g.~\cite{O11,GHLS14} for surveys), which (among other aims) uses tools from complexity theory to uncover the limits and structure of physical systems in nature. In recent years, this interdisciplinary research has led to a growing body of work on classical \emph{approximation algorithms} for $k$-LH. It is this direction which we pursue in this paper.

\subsection{Product state algorithms and previous work} 

We begin by reviewing previous work on approximation algorithms for $k$-LH.

\paragraph*{Mean-field or product-state algorithms.} All known classical approximation algorithms for $k$-LH fall under the category of \emph{mean-field} or \emph{product-state} algorithms. Here, the issue is that the optimal solution to a $k$-LH instance may be an exponentially large quantum state $\ket{\psi}\in\complex^{2^n}$ (which would be the \emph{ground state} or eigenvector of $H$ corresponding to its ground state energy, $\lmin(H)$). Any classical algorithm for approximating $k$-LH must hence presumably pick a reasonable succinctly representable class of quantum states to optimize over; the simplest such class is the set of $n$-qubit \emph{product states}. A product state is the quantum analogue of a product distribution --- the entire $2^n$-dimensional vector $\ket{\psi}$ is fully specified by locally giving an assignment $\ket{\psi_i}\in\complex^2$ to each qubit $i$, i.e. $\ket{\psi}=\ket{\psi_1}\otimes\cdots\otimes\ket{\psi_n}$.

\emph{Remark.} It is crucial to note that even though product states are not entangled, they nevertheless generalize classical bit string assignments, and are thus \emph{NP-hard to optimize over} in the worst case. Thus, even with this simplest ansatz of product states, approximating $k$-LH is highly non-trivial.

\paragraph*{Previous work for QMA-complete models.} We now outline the known approximation algorithms for $k$-LH, which are all mean-field algorithms. The first such work was due to Bansal, Bravyi, and Terhal~\cite{BBT09}, who gave a classical polynomial-time approximation scheme (PTAS) for $k$-LH on bounded degree planar graphs. Next, Gharibian and Kempe~\cite{GK12} gave a PTAS for computing {product-state} solutions to dense CSPs, and showed their algorithm yielded a $d^{1-k}$ approximation for dense $k$-LH on local $d$-dimensional systems. Brand\~{a}o and Harrow~\cite{BH13} then gave PTAS-es for $k$-LH in three settings: Planar, dense, and low threshold rank graphs.  Most recently, Bravyi, Gosset, K\"onig, and Temme~\cite{BGKT18} gave a $O(\log n)$-approximation algorithm for traceless 2-local Hamiltonians. As we shall see, this last work may be viewed as complementary to ours (and indeed, the techniques used are similar, although independently developed) --- the algorithm of~\cite{BGKT18} is more general than ours (applies to all traceless Hamiltonians) but has a non-constant approximation ratio ($O(\log n)$ ratio). We take the complementary route: We study a more specific model, the central quantum Heisenberg model, but in return are able to achieve substantially stronger \emph{constant} approximation ratios. Finally, Lee and Hallgren~\cite{HL19} obtain a non-trivial constant-factor approximation algorithm for $2$-LH when each clause is positive semi-definite. We remark that with the exception of ~\cite{BBT09}, all of these works are based on semidefinite programs (SDP).

\paragraph*{Previous work for Hamiltonians of ``intermediate'' complexity.} For completeness, we also note that Bravyi~\cite{B14} and Bravyi and Gosset~\cite{BG16} showed fully polynomial randomized approximation schemes (FPRAS) for approximating the partition function\footnote{The ability to compute the partition function allows one in turn to solve $k$-LH.} of certain {ferromagnetic} models, such as the ferromagnetic transverse field Ising model (ferromagnetic TIM). In general, the TIM problem is StoqMA-complete, as shown by Bravyi and Hastings~\cite{BH14}. Here, $\class{MA}\subseteq\class{StoqMA}\subseteq\class{QMA}$, and it is generally believed $\class{StoqMA}$ is strictly smaller than QMA (the former is in the Polynomial-time Hierarchy, whereas the latter is believed not to be). Thus, such models may be thought of as being of ``intermediate'' complexity.

\paragraph*{Brief note on the quantum PCP theorem.} An advantage of any mean-field classical approximation algorithm for $k$-LH is that it yields negative progress on the central open question: Does a quantum PCP theorem\footnote{Recently, the ``entangled non-local games'' version of the PCP theorem has been established under randomized reductions~\cite{NV18}. The ``hardness of approximation'' version involving approximating ground state energies of local Hamiltonians, however, which is relevant to this work, remains open.} hold~\cite{AALV09,AAV13}? This is because such algorithms show that a \emph{classical} (i.e. NP) witness suffices to attain certain approximation ratios for $k$-LH. Thus, unless $\class{NP}=\class{QMA}$ (which is believed highly unlikely), a quantum PCP theorem for $k$-LH with the same approximation ratios cannot hold.




\subsection{Our results}
We give classical approximation algorithms for a maximization version of the fundamental \emph{quantum Heisenberg model}, which can be thought of as a family of Hamiltonians generalizing the NP-complete Max-Cut problem.

\paragraph*{Maximization versus minimization.} For clarity, we study the natural \emph{maximization} variant of $k$-LH, in which one is given $H$ and asked to estimate its \emph{largest} eigenvalue $\lmax(H)$. We study this variant for two reasons (see also~\cite{GK11}): First, in the minimization setting, if $\lmin(H)=0$, the notion of an approximation ratio is not well-defined, and second, the maximization setting allows us to naturally align with classical approximation algorithms for CSPs such as Max-Cut. We remark that in the exact setting, computing $\lmin(H)$ is equivalent in complexity to computing $\lmax(H)$ since $\lmin(H)=\lmax(-H)$ --- thus, both maximization and minimization variants of $k$-LH are QMA-complete.  More precisely, if $H$ is a Hamiltonian corresponding to an instance of the (anti-ferromagnetic) quantum Heisenberg model, then we approximate the instance $\lmax(mI-H)$, where $m$ is the number of clauses. In terms of approximability, the complexity of both models need not coincide.  An appropriate classical analogy is the relationship of the Ising problem on graphs, $\min_{z_i \in \{-1,1\}} \sum_{ij \in E} z_i z_j$, for which an $O(\log n)$-approximation is the best known (see, e.g.,~\cite{CW04}) and the Max-Cut problem, $\max_{z_i \in \{-1,1\}} \sum_{ij \in E} (1-z_i z_j)/2$, for which the Goemans-Williamson $0.878$-approximation is known.  These problems are equivalent from an exact optimization perspective.  From an approximation perspective, the standard quantum Heisenberg model is a generalization of the Ising problem, while the problem we study is a generalization of Max-Cut (see Appendix~\ref{app:MC} for details).  We note that Bravyi et al.'s $O(\log n)$-approximation for traceless 2-local Hamiltonians~\cite{BGKT18} includes the standard quantum Heisenberg model as a special case.

\paragraph*{The quantum Heisenberg model.} The Heisenberg model is fundamental to the study of magnetism, and has received attention for at least almost a century now~(e.g. the well-known Bethe ansatz of 1931~\cite{B31}). It is a family of $2$-local Hamiltonians, defined in this paper as having constraints $H_{ij}$ acting on qubits $i$ and $j$ of the form (see Section~\ref{scn:prelims} for formal definitions):
\[
    H_{ij}=I-\alpha X_i\otimes X_j-\beta Y_i\otimes Y_j -\gamma Z_i\otimes Z_j,
\]
for Pauli matrices $X,Y,Z$, and where $X_i$ indicates $X$ acts on qubits $i$. (Recall we study \emph{maximization}, i.e. estimating $\lmax(H)$.) Three important  well-known special cases of this model are: (1) the Max-Cut problem ($\alpha=\beta=0$, $\gamma=1$) (in Appendix~\ref{app:MC}, we sketch why this case indeed captures Max-Cut), (2) the (anti-ferromagnetic) XY model ($\alpha=\beta=1$, $\gamma=0$), and (3) the (anti-ferromagnetic) Heisenberg model ($\alpha=\beta=\gamma=1$), which we also refer to as the anti-ferromagnet. The latter, for example, is notoriously difficult to solve \emph{even on bipartite graphs}, in contrast to Max-Cut. The only solutions for the anti-ferromagnet we are aware of is on the 1D chain~\cite{B31} and on the complete graph (see, e.g.,~\cite{CM16}). This notoriety is well-deserved --- when non-negative polynomial-size weights are allowed on each constraint, both the XY model and anti-ferromagnet are QMA-hard~\cite{CM16,PM15}.

In this paper, we first show (Section~\ref{sscn:tight}) how to approximate the XY model and anti-ferromagnet almost optimally. The following is an informal statement (see Theorem~\ref{thm:alg1} for a formal statement).
\begin{theorem}\label{thm:temp}
    Let $\alpha,\beta,\gamma\in\set{0,1}$. Then, there exists a randomized, polynomial time classical algorithm for the quantum Heisenberg model which outputs a product state solution with ratio at least:
    \begin{itemize}
        \item $0.878$ if $\alpha+\beta+\gamma=1$ (equivalent to Max-Cut),
        \item $0.649$ if $\alpha+\beta+\gamma=2$ (equivalent to the XY model),
        \item $0.498$ if $\alpha+\beta+\gamma=3$ (anti-ferromagnet).
    \end{itemize}
\end{theorem}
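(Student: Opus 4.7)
My plan is to design an SDP-based rounding algorithm in the spirit of Goemans--Williamson. The starting point is that for a pure product state $\bigotimes_i |\psi_i\rangle$ with Bloch vector $\vec{v}_i = (\langle X_i\rangle,\langle Y_i\rangle,\langle Z_i\rangle)\in\mathbb{R}^3$, one has $\langle H\rangle=\sum_{ij\in E}(1-\alpha v_i^x v_j^x-\beta v_i^y v_j^y-\gamma v_i^z v_j^z)$. In each of the three special cases of the theorem, only $k:=\alpha+\beta+\gamma\in\{1,2,3\}$ of the Bloch components enter the objective, so finding the best product state is the same as $\max\sum_{ij}(1-\vec{u}_i\cdot\vec{u}_j)$ over unit $\vec{u}_i\in\mathbb{R}^k$. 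Relaxing the dimension constraint to $\vec{u}_i\in\mathbb{R}^n$ produces the familiar Max-Cut SDP, solvable in polynomial time via interior-point methods.

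I would then round an SDP solution $\{\vec{u}_i^\star\}$ back to a $k$-dimensional Bloch vector by sampling $k$ independent standard Gaussian vectors $g_1,\dots,g_k\in\mathbb{R}^n$, forming $\vec{p}_i=(g_1\cdot\vec{u}_i^\star,\dots,g_k\cdot\vec{u}_i^\star)\in\mathbb{R}^k$, and using $\vec{p}_i/\|\vec{p}_i\|$ as the new Bloch vector. For $k=1$ this recovers Goemans--Williamson hyperplane rounding; for $k=2$ the new Bloch vector lies on the $xy$-equator, yielding a state of the form $(|0\rangle+e^{i\phi}|1\rangle)/\sqrt{2}$; for $k=3$ the Bloch vector may point anywhere on the sphere. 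Crucially, the Gaussian choice makes the joint law of $(\vec{p}_i,\vec{p}_j)$ depend only on the SDP inner product $\rho_{ij}:=\vec{u}_i^\star\cdot\vec{u}_j^\star$.

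With that in hand, the approximation-ratio analysis reduces to studying $f_k(\rho):=\mathbb{E}[\langle\vec{p}_i/\|\vec{p}_i\|,\vec{p}_j/\|\vec{p}_j\|\rangle]$ and bounding $\alpha_k:=\inf_{\rho\in[-1,1]}(1-f_k(\rho))/(1-\rho)$, since then $\mathrm{ALG}\geq\alpha_k\cdot\mathrm{SDP}$. Case $k=1$ reproduces the Goemans--Williamson identity $f_1(\rho)=1-(2/\pi)\arccos\rho$ with $\alpha_1\approx 0.878$. For $k\geq 2$, rotation invariance reduces the problem to two correlated $k$-dimensional Gaussians, and $f_k(\rho)$ becomes the expected cosine of the angle between their directions, which I would express as a one-dimensional integral against a projected-Cauchy-style density and then minimize (numerically) over $\rho$.

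Finally, the theorem's ratios are stated against $\lambda_{\max}(H)$, not against the product-state optimum or SDP value, so I would separately need a bound of the form $\lambda_{\max}(H)\leq c_k\cdot\mathrm{SDP}$ with $c_1=1$, $c_2=3/2$, $c_3=2$ (matched by the top eigenvalues of each $H_{ij}$, namely $2,3,4$ respectively, whereas the SDP contribution per edge is at most $2$). Combined with the rounding bound, the global approximation ratio is $\alpha_k/c_k$, which evaluates numerically to at least $0.878$, $0.649$, and $0.498$. The main obstacle is the rounding analysis for $k\in\{2,3\}$: the function $f_k$ is non-elementary, and identifying and certifying the worst-case $\rho$ is substantially more subtle than the classical $k=1$ case. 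A secondary obstacle is establishing the $c_k$-factor SDP-to-$\lambda_{\max}$ inequality; I would attempt this either by a clean duality argument or by directly constructing a feasible SDP solution from an arbitrary density matrix via vectorization, so that the SDP value dominates $\lambda_{\max}(H)/c_k$.
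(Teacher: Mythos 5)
Your rounding scheme and its analysis are exactly the paper's: the Bri\"{e}t--de Oliveira Filho--Vallentin Gaussian-matrix projection applied qubit-by-qubit, with the per-edge expectation $f_k(\rho)$ given by Lemma~\ref{l:BFV}. The gap is in how you relate the SDP value to $\lmax(H)$, and it costs you the stated constants. You relax the \emph{product-state} optimum to the plain Max-Cut SDP (unit vectors, per-edge value $w_{ij}(1-\rho_{ij})\le 2w_{ij}$) and then bridge to $\lmax(H)$ by a separate multiplicative factor $c_k=(1+k)/2$. Both inequalities are individually provable --- $\mathrm{ALG}\ge\alpha_k\cdot\mathrm{SDP}$, and $\lmax(H)\le c_k\cdot\mathrm{SDP}$ follows by averaging the three Pauli blocks of the moment matrix of an optimal $\rho$ into a feasible correlation matrix $N$ and using $1+N_{ij}\ge 0$ --- but their worst cases occur at \emph{different} edge configurations: the rounding ratio $(1-f_k(\rho))/(1-\rho)$ is minimized around $\rho\approx-0.6$, where for $k=3$ it is only about $0.956$ (note $f_k'(0)=\tfrac{2}{k}\bigl(\Gamma((k+1)/2)/\Gamma(k/2)\bigr)^2<1$, so $\alpha_k<1$ strictly), whereas the factor $c_k$ is tight only at $\rho=-1$. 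Composing the two gives $\alpha_3/c_3\approx 0.956/2\approx 0.478$ and $\alpha_2/c_2\approx 0.935/1.5\approx 0.623$, not $0.498$ and $0.649$; your claimed numerical evaluation of $\alpha_k/c_k$ is incorrect, and the theorem's constants do not follow from this decomposition.

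The fix --- which is what the paper does --- is to make the comparison to $\lmax(H)$ per edge and jointly with the rounding. The paper's SDP is a level-one noncommutative Lasserre/moment relaxation over all Pauli monomials $\sigma_k^i$ (a $3n\times 3n$ PSD moment matrix with unit diagonal and anticommutation constraints), whose per-edge objective $w_{ij}\bigl(1-\sum_k M(ik,jk)\bigr)=w_{ij}(1-kt)$ ranges up to $(1+k)w_{ij}$, matching $\lmax(H_{ij})$; this SDP upper-bounds $\lmax(H)$ directly, and the single quantity to minimize is $(1-F[k,t])/(1-kt)$ over $t\in[-1,1/k)$, which does evaluate to $0.878$, $0.649$, $0.498$. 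Equivalently, within your framework, replace the multiplicative bound $\lmax(H)\le\tfrac{1+k}{2}\mathrm{SDP}$ by the affine bound $\lmax(H)\le k\cdot\mathrm{SDP}-(k-1)\sum_{(i,j)\in E}w_{ij}$ (which follows from the same construction of $N$); this turns your per-edge denominator $c_k(1-\rho)$ into $1-k\rho$ and recovers the theorem's ratios.
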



We then show in Corollary~\ref{cor:ratio2} that these ratios are almost optimal, in the sense that the \emph{best} approximation ratios possible for a product state solution (whether efficiently attainable or not) to the XY model and anti-ferromagnet are at most $2/3$ and $1/2$, respectively. It should be noted that, in contrast, the naive ``random assignment'' strategy (i.e. choose the maximally mixed state $I/2^n$ as the assignment) yields ratios of only $1/3$ and $1/4$ for the XY model and anti-ferromagnet, respectively.

Next, in Section~\ref{sscn:beyondH} we give two ways in which our algorithm (or a variant of it) can be applied to a broader class of Hamiltonians:
\begin{itemize}
    \item Section~\ref{ssscn:varyweights} shows how to relax the constraint that $\alpha,\beta,\gamma\in\set{0,1}$. Specifically, we allow a different set of parameters $\alpha_{ij},\beta_{ij},\gamma_{ij}\in[-1,1]$ for each edge $(i,j)\in E$. In return for this generality, the approximation ratios we obtain are slightly weaker.
    \item Section~\ref{ssscn:localu} uses a trick from entanglement theory~\cite{HH96,HH96_2} to characterize the class of models which can be reduced to the Heisenberg model via application of local unitaries, and hence to which our algorithms apply.
\end{itemize}

\subsection{Techniques}

Our algorithms are based on semidefinite programming (SDP), and in particular use the first level of a non-commutative generalization of the Lasserre SDP hierarchy. Similar generalizations have been used previously in~\cite{BH13,BGKT18}. Note that a key difference between our approach and the previous SDP-based works of~\cite{GK11,BH13} is that the SDPs we derive are relaxations not just of the best attainable \emph{product state} objective function value, but rather of the true optimal value $\lmax(H)$ itself. This is why the ratios we obtain in Theorem~\ref{thm:alg1} can be close to optimal for a product state ansatz.  We note that a simple modification of our SDP relaxation does give an upper bound on the NP-hard problem of finding the best product-state solution; our techniques can be used to yield classical approximation algorithms for this problem as well.

\subsection{Open questions} Many questions in the study of approximability in the quantum setting remain open. For example, what are the best achievable approximation ratios classically for the Heisenberg model, and do hardness of approximation results based on the unique games conjecture yield tight bounds as they do for Max-Cut and related classical CSPs? Can tight ratios of $2/3$ and $1/2$ be obtained for the XY model and anti-ferromagnet, respectively? Are there constant-factor approximation algorithms for general $k$-LH (recall~\cite{BGKT18} give $O(\log n)$ approximations for traceless 2-local Hamiltonians)? How well can one approximate ``intermediate'' Hamiltonian models such as the \emph{anti-ferromagnetic} TIM (recall~\cite{B14,BG16} approximate the ferromagnetic TIM)? Can one optimize approximately over more general ansatzes than mean-field/product states, such as tensor network states? Can quantum approximation algorithms \emph{provably} outperform the best classical approximation algorithms? Finally, does a quantum PCP theorem (in the sense of ``hardness of approximation for quantum CSPs'') hold? It is hoped that the current paper will act as a step towards resolutions for some of these problems.

\subsection{Organization} In Section~\ref{scn:prelims}, we give definitions and preliminaries. Section~\ref{sscn:ubounds} gives upper bounds on the power of the mean-field ansatz. Section~\ref{sscn:tight} gives our approximation algorithms. Certain technical proofs are deferred to Appendix~\ref{app:sec2proof}. Some background in basic quantum information is assumed; see, for example, Nielsen and Chuang~\cite{NC00} for a standard reference.

\section{Preliminaries}\label{scn:prelims}

\subsection{Notation} Let $[n]:=\set{1,\ldots, n}$. The sets $\herm{\spa{X}}$ and $\dens{\spa{X}}$ denote the sets of Hermitian and density operators acting on complex Euclidean space $\spa{X}$. For $A,B\in\herm{\spa{X}}$, we say $A\succeq B$ if $A-B$ is positive semidefinite, i.e. $A-B\succeq 0$. The spectral/operator norm of $A$ is denoted $\snorm{A}=\trace(\sqrt{A^\dagger A})$.

\subsection{Physically motivated $2$-local Hamiltonians}

Let $G=(V,E)$ be a simple, undirected graph with $\abs{V}=n$ and $\abs{E}=m$. In this section, we study physically motivated $2$-local Hamiltonians $H$ based on the quantum Heisenberg model, $H=\sum_{(i,j)\in E}w_{ij}H_{ij}$ for $H_{ij}=\alpha X_i X_j+\beta Y_i Y_j + \gamma Z_i Z_j$ (more accurately, since we are in the setting of maximization, we use local terms as given in Equation~(\ref{eqn:heis})), where we consider $\alpha,\beta,\gamma\in\set{0,1}$ and $w_{ij}\geq 0$. This includes QMA-hard special cases such as the quantum Heisenberg anti-ferromagnet~\cite{CM16,PM15}. Here, $X$, $Y$, $Z$ are the Pauli matrices
\[
    X=\left(
      \begin{array}{cc}
        0 & 1 \\
        1 & 0 \\
      \end{array}
    \right),\quad
    Y=\left(
      \begin{array}{cc}
        0 & -i \\
        i & 0 \\
      \end{array}
    \right),    \quad
    Z=\left(
      \begin{array}{cc}
        1 & 0 \\
        0 & -1 \\
      \end{array}
    \right),
\]
and $X_i, Y_i, Z_i$ refer to the Pauli matrices acting on the $ith$ qubit (i.e., tensored with identity on all other qubits).

Specifically, we consider the equivalent (in the setting of exact computation) maximization variant where each local term is defined
\begin{equation}\label{eqn:heis}
    H_{ij}=I-\alpha X_i X_j-\beta Y_i Y_j -\gamma Z_i Z_j,
\end{equation}
and our goal is to estimate the \emph{largest} eigenvalue of $H=\sum_{(i,j)\in E}w_{ij}H_{ij}$ with $w_{ij}\geq 0$. This variant is clearly still QMA-hard, and includes as a special case, for example, the canonical NP-complete problem Max-Cut, obtained up to scaling by a constant factor of $2$) by setting $\alpha=0, \beta=0, \gamma=1$.

We now set definitions for the rest of this paper. Let $\F$ denote the set of all $H$ with (non-negative weighted) constraints of the form of Equation~(\ref{eqn:heis}), with parameters $\alpha,\beta,\gamma$ and on all interaction graphs $G$ (for all $n\geq 0$). For example, $F_{0,0,1}$ denotes the set of all possible Max-Cut instances with non-negative edge weights. In this paper, we refer to the family $F=\bigcup_{\alpha,\beta,\gamma\in\set{0,1}}\F$ as ``the {Heisenberg model}''. Let $\sep=\operatorname{conv}$$(\bigotimes_{i=1}^n \rho_i\mid \rho_i\in\dens{\complex^2})$ for $\operatorname{conv}(S)$ the convex hull of set $S$, i.e. $\sep$ is the set of fully separable quantum states on $n$ qubits.


\section{Upper bounds on product state ratios}\label{sscn:ubounds}

As quantum states on $n$ qubits generally require exponential space to represent, a classical approximation algorithm for estimating ground state energies must generally optimize over a restricted class of quantum states, or an \emph{ansatz}. Our ansatz in this section will be to optimize over $\sep$. To formalize this, we first define the notion of a product state ratio.

\paragraph*{Product state ratio.} Let $H\in\herm{(\complex^2)^{\otimes n}}$ be a Hermitian operator with largest eigenvalue $\opt(H)=\lmax(H)$, and let
\[
    \optprod(H):=\max_{\rho\in \sep}\trace(H\rho).
\]
  By convexity, the optimal $\rho$ here is a (pure) product state. The \emph{product state ratio} is defined as $\optprod(H)/\opt(H)$. For the Heisenberg model in particular, for any fixed $\alpha,\beta,\gamma\in\set{0,1}$, define
\[
    \G=\min_{H\in\F} \frac{\optprod(H)}{\opt(H)},
\]
the worst-case product state ratio over all Hamiltonians in $\F$.

By definition, $\G$ yields an upper bound on the best approximation ratio achievable by any approximation algorithm using a product state ansatz. It is thus crucial to understand $\G$, which we now do for the Heisenberg model. For this, we first give two lemmas which fully characterize the optimal product state ratio on a \emph{single} (unit weight) edge. Note the characterization we give is more general than how we defined the Heisenberg model here, in that it applies for any $\alpha,\beta,\gamma\in\reals$. (For clarity, the term proportional to the identity is omitted in Lemmas~\ref{l:optprod} and~\ref{l:opt} below, but is accounted for in the subsequent statement of Corollary~\ref{cor:ratio}.) The proofs of both lemmas are deferred to Appendix~\ref{app:sec2proof}.

\begin{lemma}\label{l:optprod}
    Let $H=\alpha X\otimes X + \beta Y\otimes Y+\gamma Z\otimes Z$ for $\alpha,\beta,\gamma\in\reals$. Then $\optprod(H)=\snorm{(\alpha,\beta,\gamma)}$.
\end{lemma}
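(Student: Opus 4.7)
The plan is to parameterize product states via single-qubit Bloch vectors and reduce the optimization to maximizing a bilinear form over unit balls in $\reals^3$. Recall every single-qubit density operator can be written uniquely as $\rho_i = \tfrac{1}{2}(I + r_i^1 X + r_i^2 Y + r_i^3 Z)$, where the Bloch vector $\vec r_i\in\reals^3$ satisfies $\enorm{\vec r_i}\leq 1$, with equality if and only if $\rho_i$ is pure. Since $\trace(H\rho)$ is linear in $\rho$ and $\sep$ (for two qubits) is the convex hull of pure product states, the maximum defining $\optprod(H)$ is attained at an extreme point, so it suffices to consider $\rho = \ket{\psi_1}\bra{\psi_1}\otimes\ket{\psi_2}\bra{\psi_2}$.

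Next I would compute the expectation value directly. Since $\trace((P\otimes P)(\rho_1\otimes\rho_2)) = \trace(P\rho_1)\trace(P\rho_2)$ for any single-qubit operator $P$, we obtain
\[
  \trace(H\rho) \;=\; \alpha\, r_1^1 r_2^1 \;+\; \beta\, r_1^2 r_2^2 \;+\; \gamma\, r_1^3 r_2^3 \;=\; \vec r_1^{T} D \vec r_2,
\]
where $D = \operatorname{diag}(\alpha,\beta,\gamma)$. Thus $\optprod(H)$ equals the maximum of the bilinear form $\vec r_1^{T}D\vec r_2$ subject to $\enorm{\vec r_1},\enorm{\vec r_2}\leq 1$, which is precisely the operator $2$-norm of the diagonal matrix $D$ on $\reals^3$.

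For the upper bound, two applications of Cauchy--Schwarz yield
\[
  \vec r_1^T D \vec r_2 \;\leq\; \enorm{\vec r_1}\cdot\enorm{D\vec r_2} \;\leq\; \enorm{D\vec r_2} \;\leq\; \max\set{|\alpha|,|\beta|,|\gamma|} \;=\; \snorm{(\alpha,\beta,\gamma)},
\]
where the last step uses $\enorm{D\vec r_2}^2 = \alpha^2 (r_2^1)^2 + \beta^2 (r_2^2)^2 + \gamma^2 (r_2^3)^2 \leq \max\set{\alpha^2,\beta^2,\gamma^2}\cdot\enorm{\vec r_2}^2$. For the matching lower bound, pick an index $a^{*}\in\set{1,2,3}$ achieving the maximum of $\set{|\alpha|,|\beta|,|\gamma|}$ and choose $\ket{\psi_1}$ and $\ket{\psi_2}$ to be $+1$ eigenstates of the corresponding Pauli, flipping one of them to the $-1$ eigenstate if the coefficient is negative. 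The resulting Bloch vectors have only an $a^{*}$-component, each of magnitude $1$, and the expectation value evaluates to exactly $\snorm{(\alpha,\beta,\gamma)}$.

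The main obstacle is essentially nil: once one recognizes the product-state expectation as a bilinear form in the Bloch vectors, the lemma reduces to a standard fact about the operator norm of a real diagonal matrix. The only minor points requiring care are (i) the reduction to pure product states, which follows from convexity of $\sep$, and (ii) matching the sign of the coefficient $\alpha,\beta,$ or $\gamma$ that witnesses $\snorm{(\alpha,\beta,\gamma)}$, which is handled by the single sign-flip in the lower-bound construction.
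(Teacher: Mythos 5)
Your proof is correct, but it takes a genuinely different route from the paper's. The paper parameterizes a pure product state by its amplitudes, $\ket{\psi}=ac\ket{00}+ad\ket{01}+bc\ket{10}+bd\ket{11}$, expands $\bra{\psi}H\ket{\psi}$ explicitly, reduces to non-negative real amplitudes via the triangle inequality, and then closes the upper bound with the identity $4abcd+(a^2-b^2)(c^2-d^2)=(ac+bd)^2-(ad-bc)^2\leq 1$ together with a case split on whether $\abs{\alpha}$ or $\abs{\beta}$ dominates. You instead pass to Bloch vectors, observe that $\trace\bigl(H(\rho_1\otimes\rho_2)\bigr)=\vec r_1^{\,T}D\vec r_2$ for $D=\operatorname{diag}(\alpha,\beta,\gamma)$, and identify $\optprod(H)$ with the operator norm of $D$ over the unit balls, which Cauchy--Schwarz immediately bounds by $\max(\abs{\alpha},\abs{\beta},\abs{\gamma})$; your lower-bound witnesses (Pauli eigenstates with a sign flip) match the paper's explicit states. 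Your argument is shorter, avoids the case analysis entirely, handles mixed single-qubit factors for free (the constraint $\enorm{\vec r_i}\leq 1$ already covers them, so the appeal to extreme points of $\sep$ is a convenience rather than a necessity), and generalizes directly: for an arbitrary correlation matrix $M$ in the Fano form, the same computation gives $\optprod$ as the largest singular value of $M$. It is also consistent with the Bloch-vector bookkeeping the paper itself uses later in the rounding analysis of its Theorem~\ref{thm:alg1}. What the paper's amplitude-level computation buys is that the intermediate expression (its Equation~(\ref{eqn:exact})) is reused verbatim to exhibit the achieving states, but this is a modest convenience rather than a substantive advantage.
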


\begin{lemma}\label{l:opt}
    Let $H=\alpha X\otimes X + \beta Y\otimes Y+\gamma Z\otimes Z$ for $\alpha,\beta,\gamma\in\reals$. Then
    \[
        \opt(H)=\max(\abs{\alpha-\beta}+\gamma,\abs{\alpha+\beta}-\gamma).
    \]
\end{lemma}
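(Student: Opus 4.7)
The plan is to diagonalize $H$ explicitly in the Bell basis, exploiting the fact that the three tensor products $X\otimes X$, $Y\otimes Y$, $Z\otimes Z$ pairwise commute. This commutation can be checked by a short direct computation (or by noting that each pair anti-commutes on each tensor factor, and anti-commutation squared is commutation). Consequently $H$ is simultaneously diagonalizable with them, and we only need to exhibit a common eigenbasis.

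The natural candidate is the Bell basis $\ket{\Phi^{\pm}}=\frac{1}{\sqrt{2}}(\ket{00}\pm\ket{11})$, $\ket{\Psi^{\pm}}=\frac{1}{\sqrt{2}}(\ket{01}\pm\ket{10})$. I would compute the action of each of $X\otimes X$, $Y\otimes Y$, $Z\otimes Z$ on each Bell vector; each acts as $\pm 1$, and the resulting sign pattern is well known: on $(\ket{\Phi^+},\ket{\Phi^-},\ket{\Psi^+},\ket{\Psi^-})$ the diagonals are $(+,-,+,-)$ for $X\otimes X$, $(-,+,+,-)$ for $Y\otimes Y$, and $(+,+,-,-)$ for $Z\otimes Z$. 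This immediately yields the four eigenvalues of $H$:
\[
\alpha-\beta+\gamma,\quad -\alpha+\beta+\gamma,\quad \alpha+\beta-\gamma,\quad -\alpha-\beta-\gamma.
\]

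Finally, I would take the maximum and simplify. Using $\max(a,-a)=|a|$, the first two eigenvalues combine to $|\alpha-\beta|+\gamma$, and the last two combine to $|\alpha+\beta|-\gamma$. Hence $\opt(H)=\lmax(H)=\max(|\alpha-\beta|+\gamma,\,|\alpha+\beta|-\gamma)$, as claimed.

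There is essentially no obstacle here: the entire argument is a short eigenvalue computation on a $4\times 4$ matrix whose block structure is made trivial by the Bell-basis change of variables. The only thing to be careful about is the sign bookkeeping in the Bell-basis action of $Y\otimes Y$ (since each $Y$ contributes an $i$), which is why I would write out that one action in full rather than appealing to symmetry.
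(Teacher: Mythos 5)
Your proof is correct, but it takes a genuinely different route from the paper's. You diagonalize $H$ exactly: since $X\otimes X$, $Y\otimes Y$, $Z\otimes Z$ pairwise commute, the Bell basis is a common eigenbasis, and your sign pattern is right ($Y\otimes Y$ indeed acts as $(-,+,+,-)$ on $(\ket{\Phi^+},\ket{\Phi^-},\ket{\Psi^+},\ket{\Psi^-})$ because the two factors of $i$ multiply to $-1$ on $\ket{00},\ket{11}$ and to $+1$ on $\ket{01},\ket{10}$). This yields the full spectrum $\{\alpha-\beta+\gamma,\,-\alpha+\beta+\gamma,\,\alpha+\beta-\gamma,\,-\alpha-\beta-\gamma\}$, and pairing the eigenvalues via $\max(a,-a)=\abs{a}$ gives exactly the claimed formula. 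The paper instead argues variationally: it expands $\bra{\psi}H\ket{\psi}$ for an arbitrary two-qubit state, upper-bounds it by $\max(\abs{\alpha-\beta}+\gamma,\abs{\alpha+\beta}-\gamma)$ using the arithmetic--geometric mean inequality, and then exhibits achieving states --- which are precisely your Bell states in disguise ($a=d=\pm 1/\sqrt{2}$ and $b=c=\pm 1/\sqrt{2}$). Your approach is arguably cleaner and more informative, since it produces the entire spectrum rather than just $\lmax$; the paper's variational computation has the minor advantage of running in parallel with its proof of Lemma~\ref{l:optprod}, where the restriction to product states genuinely prevents exact diagonalization and a bound-plus-achievability argument is unavoidable.
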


The following corollary now follows essentially immediately by applying Lemmas~\ref{l:optprod} and~\ref{l:opt} to a single unit weight edge of the form in Equation~\ref{eqn:heis} (i.e. with an identity term).
\begin{corollary}\label{cor:ratio}
        For any $\alpha,\beta,\gamma\in\reals$,
        \[
            \G\leq \frac{1+\max(\abs{\alpha},\abs{\beta},\abs{\gamma})}{1+\max(\abs{\alpha-\beta}-\gamma,\abs{\alpha+\beta}+\gamma)}.
        \]
\end{corollary}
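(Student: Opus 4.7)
The plan is essentially immediate from the two preceding lemmas. Since $\G$ is defined as a minimum over $H\in\F$ of the product-state ratio $\optprod(H)/\opt(H)$, to establish an upper bound on $\G$ it suffices to exhibit a single instance achieving the right-hand side. The obvious candidate is the simplest element of $\F$: a two-vertex graph with one unit-weight edge, giving the two-qubit Hamiltonian
\[
H \;=\; I \;-\; \alpha X\otimes X \;-\; \beta Y\otimes Y \;-\; \gamma Z\otimes Z
\]
from Equation~(\ref{eqn:heis}).

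The first step is to compute $\optprod(H)$: since $\trace(I\rho)=1$ for every $\rho\in\sep$, the identity term contributes exactly $+1$, and Lemma~\ref{l:optprod} applied with parameters $(-\alpha,-\beta,-\gamma)$ in place of $(\alpha,\beta,\gamma)$ handles the remainder, yielding $\optprod(H) = 1 + \snorm{(-\alpha,-\beta,-\gamma)} = 1 + \max(\abs{\alpha},\abs{\beta},\abs{\gamma})$. This is exactly the numerator of the claimed bound.

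The second step is to compute $\opt(H)$ in the same manner: the identity shifts the entire spectrum uniformly by $+1$, so $\opt(H) = 1 + \opt(-\alpha X\otimes X - \beta Y\otimes Y - \gamma Z\otimes Z)$, and Lemma~\ref{l:opt} applied with $(-\alpha,-\beta,-\gamma)$ gives $\opt(H) = 1 + \max(\abs{\alpha-\beta}-\gamma,\, \abs{\alpha+\beta}+\gamma)$, matching the denominator. Dividing the two expressions and observing that this particular $H$ belongs to $\F$ (take $w_{12}=1$) completes the argument.

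There is essentially no obstacle here -- the whole proof is a direct substitution into the two lemmas, plus tracking the $+1$ shift from the identity term. The only mild subtlety is sign-tracking in Lemma~\ref{l:opt}: negating all of $\alpha,\beta,\gamma$ leaves $\abs{\alpha\pm\beta}$ invariant but flips the sign of the bare $\gamma$ contribution, which is precisely what converts the ``$+\gamma$ versus $-\gamma$'' of Lemma~\ref{l:opt} into the ``$-\gamma$ versus $+\gamma$'' that appears inside the corollary's denominator.
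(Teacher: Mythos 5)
Your proposal is correct and matches the paper's own (much terser) proof: both restrict to a single unit-weight edge, apply Lemmas~\ref{l:optprod} and~\ref{l:opt} with the parameters negated to account for the signs in Equation~(\ref{eqn:heis}), and note that this swaps the $+\gamma$ and $-\gamma$ terms in the denominator. Your write-up simply makes explicit the single-edge instance and the $+1$ shift from the identity term, which the paper leaves implicit.
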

\begin{proof}
   Combine Lemmas~\ref{l:optprod} and~\ref{l:opt} with the following additional observation: The values $\alpha,\beta,\gamma$, as defined for $\F$, should be interpreted as $-\alpha,-\beta,-\gamma$ for Lemmas~\ref{l:optprod} and~\ref{l:opt} due to how Equation~\ref{eqn:heis} is stated. As a result, the positions of the $\gamma$ and $-\gamma$ terms are swapped in the result of Lemma~\ref{l:opt}.
\end{proof}

We thus have the following for the special case of the Heisenberg model we consider here (i.e. $\alpha,\beta,\gamma\in\set{0,1}$).
\begin{corollary}\label{cor:ratio2}
    For any $\alpha,\beta,\gamma\in\set{0,1}$, if:
    \begin{itemize}
        \item $\alpha+\beta+\gamma=1$, then $\G=1$.
        \item $\alpha+\beta+\gamma=2$, then $\G\leq 2/3$.
        \item $\alpha+\beta+\gamma=3$, then $\G\leq 1/2$.
    \end{itemize}
    \begin{proof}
        When $\alpha,\beta,\gamma\in\set{0,1}$, the bound of Corollary~\ref{cor:ratio} simplifies to
        \[
            \G\leq \frac{2}{1+\alpha+\beta+\gamma},
        \]
        from which the upper bounds claimed follow. The matching lower bound for $\alpha+\beta+\gamma=1$ is obtained since $H$ can be mapped via local Pauli gates to $H'\in F_{0,0,1}$, i.e. $H'$ is diagonal in the standard basis. Thus, product states are optimal in this case. For example, applying local Hadamard gates to each qubit maps any $H\in F_{1,0,0}$ to $H'\in F_{0,0,1}$. (A matching lower bound can also be obtained for $\alpha+\beta+\gamma=3$ by observing that $H_{ij}\succeq 0$, and using the general result that any local Hamiltonian $H'$ (not necessarily from the Heisenberg model) with positive semidefinite constraints satisfies $\optprod(H')/\opt(H')\geq 1/2$~\cite{GK11}. However, unlike Theorem~\ref{thm:alg1}, the lower bound of~\cite{GK11} is not known to be efficiently achievable.)
    \end{proof}
\end{corollary}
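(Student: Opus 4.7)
The plan is to substitute $\alpha,\beta,\gamma\in\{0,1\}$ into the bound from Corollary~\ref{cor:ratio} and simplify, then establish a matching lower bound only in the first case. Corollary~\ref{cor:ratio} gives
\[
    \G\leq \frac{1+\max(\abs{\alpha},\abs{\beta},\abs{\gamma})}{1+\max(\abs{\alpha-\beta}-\gamma,\abs{\alpha+\beta}+\gamma)}.
\]
Since $\alpha+\beta+\gamma\geq 1$ in all three cases of interest, at least one of $\alpha,\beta,\gamma$ equals $1$, so the numerator equals $2$. For the denominator, I would argue that the $\max$ is always achieved by the second argument: because $\alpha,\beta\geq 0$, we have $\abs{\alpha+\beta}=\alpha+\beta\geq \abs{\alpha-\beta}$, and because $\gamma\geq 0$ we have $\gamma\geq -\gamma$; adding yields $\abs{\alpha+\beta}+\gamma\geq \abs{\alpha-\beta}-\gamma$. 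Hence the denominator equals $1+\alpha+\beta+\gamma$, giving the unified bound
\[
    \G\leq \frac{2}{1+\alpha+\beta+\gamma},
\]
which evaluates to $1$, $2/3$, and $1/2$ in the three respective cases.

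Next, for the case $\alpha+\beta+\gamma=1$, I still need the matching lower bound $\G\geq 1$, i.e.\ that product states are already optimal. The approach is a local-unitary reduction to Max-Cut (the $F_{0,0,1}$ family). Concretely, I would observe that applying a single-qubit unitary $U$ on each qubit $i$ conjugates each two-local term $X_iX_j$, $Y_iY_j$, or $Z_iZ_j$ by $U\otimes U$, and one can pick $U$ (Hadamard, or the appropriate rotation taking $Y$ to $Z$) so that $UXU^\dagger=Z$ or $UYU^\dagger=Z$. Thus any $H\in F_{1,0,0}$ or $H\in F_{0,1,0}$ is unitarily equivalent to some $H'\in F_{0,0,1}$, which is diagonal in the computational basis. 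For such a diagonal $H'$, the largest eigenvalue is attained by a computational basis state, which is a product state. Since local unitaries map product states to product states, the optimal product-state value of $H$ equals $\opt(H)$.

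The simplification of the denominator is the one step that is not entirely mechanical and would be the main place to be careful, particularly to make sure the argument $\abs{\alpha+\beta}+\gamma\geq \abs{\alpha-\beta}-\gamma$ uses both nonnegativity of the $\alpha,\beta$ and nonnegativity of $\gamma$. The rest is a straightforward case evaluation together with a short unitary-equivalence argument for the tightness in the weight-$1$ case. I would also add, as the excerpt does parenthetically, the observation that the $\alpha+\beta+\gamma=3$ case already satisfies $\optprod/\opt\geq 1/2$ via the general PSD bound of~\cite{GK11}, yielding equality $\G=1/2$ in that case as well, though this is not needed for the stated inequality.
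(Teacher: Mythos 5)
Your proposal is correct and follows essentially the same route as the paper: substitute into Corollary~\ref{cor:ratio} to get the unified bound $2/(1+\alpha+\beta+\gamma)$, then establish tightness for $\alpha+\beta+\gamma=1$ by a local-unitary reduction to the diagonal family $F_{0,0,1}$. You merely spell out the (correct) verification that the second argument of the $\max$ in the denominator always dominates, a step the paper leaves implicit.
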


\section{Almost optimal product-state approximation algorithms}\label{sscn:tight}

In Section~\ref{sscn:ubounds}, we gave upper bounds on $\G$ for the Heisenberg model. In this section, we give almost matching algorithmic lower bounds on $\G$ when $\alpha+\beta+\gamma\in\set{2,3}$ (recall $\alpha+\beta+\gamma=1$ is equivalent to Max-Cut, and so $\G=1$). Specifically, we give an approximation algorithm which is almost optimal in the following sense: Given $H\in\F$, it outputs a product state $\rhoprod$ with approximation ratio at least $0.649$ and $0.498$ when $\alpha+\beta+\gamma$ equals $2$ and $3$, respectively, which by Corollary~\ref{cor:ratio2} almost matches the best possible mean-field ratios of $2/3$ and $1/2$, respectively.

\begin{theorem}\label{thm:alg1}
    Let $H\in \F$ for $\alpha,\beta,\gamma\in\set{0,1}$. There exists a randomized, polynomial-time algorithm which obtains approximation ratios at least $0.878$, $0.649$ or $0.498$,  when $\alpha+\beta+\gamma$ equals $1$, $2$ or $3$, respectively.
\end{theorem}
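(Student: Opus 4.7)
My plan is to introduce a semidefinite programming relaxation of $\opt(H)=\lmax(H)$, solve it in polynomial time, and then round the SDP solution to a product state whose expected value realizes the claimed ratios. For each qubit $i\in[n]$ and each Pauli $\sigma\in\set{X,Y,Z}$ with a nonzero coefficient in $H$, I would introduce a real vector $\vec v_i^{\sigma}\in\reals^{N}$ ($N=\poly(n)$) together with a distinguished unit vector $\vec v_0$. I impose the algebraic constraints
\[
    \vec v_i^\sigma\cdot\vec v_i^\sigma=1,\qquad \vec v_i^\sigma\cdot\vec v_i^\tau=0\;\;(\sigma\neq\tau),
\]
reflecting $\sigma_i^2=I$ and the anticommutation of distinct Paulis on the same qubit, and require positive semidefiniteness of the full Gram matrix. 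The SDP objective is
\[
    \mathrm{SDP}(H)=\sum_{(i,j)\in E}w_{ij}\Bigl(1-\alpha\,\vec v_i^X\!\cdot\vec v_j^X-\beta\,\vec v_i^Y\!\cdot\vec v_j^Y-\gamma\,\vec v_i^Z\!\cdot\vec v_j^Z\Bigr).
\]
The moment matrix of the true maximizing eigenstate $\ket{\psi}$ is a feasible point of this SDP and achieves objective value $\bra{\psi}H\ket{\psi}=\opt(H)$, so $\mathrm{SDP}(H)\geq\opt(H)$; hence any product state achieving $r\cdot\mathrm{SDP}(H)$ automatically achieves ratio $r$ against $\opt(H)$, matching the paper's remark that the SDP relaxes the true ground energy (and not just $\optprod(H)$).

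The rounding step depends on $k:=\alpha+\beta+\gamma$. I draw $k$ independent standard Gaussian vectors $\vec g^{\sigma}\in\reals^N$ (one per active Pauli) and form the tentative Bloch coordinates $\tilde r_i^{\sigma}=\vec g^{\sigma}\cdot\vec v_i^\sigma$ for every qubit $i$. For $k=1$ I apply Goemans--Williamson hyperplane rounding, $\rho_i=\tfrac12(I+\operatorname{sign}(\tilde r_i^Z)Z)$, which immediately gives the $0.878$ bound for Max-Cut. For $k\in\set{2,3}$ I instead set $\rho_i$ to be the pure single-qubit state with Bloch vector $\tilde{\vec r}_i/\enorm{\tilde{\vec r}_i}\in\reals^k$; this is automatically a valid density operator, and the algorithm outputs $\rhoprod=\bigotimes_i \rho_i$.

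To analyze the expected value, I decompose $\mathbb{E}[\operatorname{tr}(H\rhoprod)]$ edge-by-edge. By rotational symmetry of the Gaussian projections, each $\mathbb{E}[\operatorname{tr}(\sigma\rho_i)\operatorname{tr}(\sigma\rho_j)]$ depends only on the SDP inner product $t_{ij}^{\sigma}:=\vec v_i^\sigma\cdot\vec v_j^\sigma$, say $\mathbb{E}[\cdots]=F(t_{ij}^\sigma)$ for an explicit function $F=F_k$ determined by the Gaussian-ratio integrals in $\reals^k$. Using the $\mathrm{SU}(2)$ invariance of the Heisenberg constraint (and the $U(1)$ invariance in the $XY$ plane for the XY model), I would symmetrize the SDP solution so that the worst-case per-edge ratio collapses to a small number of scalar parameters. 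The approximation ratio is then the infimum of
\[
    \frac{1-\alpha F(t^X)-\beta F(t^Y)-\gamma F(t^Z)}{1-\alpha t^X-\beta t^Y-\gamma t^Z}
\]
over feasible $(t^X,t^Y,t^Z)$, which I would evaluate via a one-dimensional numerical search after symmetrization. For $k=1$ this reproduces $0.878$; for $k=2,3$ I expect this to produce $0.649$ and $0.498$, respectively, matching Corollary~\ref{cor:ratio2} up to small gaps.

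The main obstacle I anticipate is the quantitative rounding analysis for $k=2$ and $k=3$: unlike the classical $\tfrac{2}{\pi}\arcsin$ identity, the normalization $\tilde{\vec r}/\enorm{\tilde{\vec r}}$ produces expectations of ratios of correlated Gaussian vectors in $\reals^2$ or $\reals^3$, whose worst-case value of $t$ must be pinned down analytically or numerically. A secondary subtlety is the symmetrization step: one must verify that averaging the SDP solution under the appropriate unitary group action preserves both feasibility and objective value, so that the per-edge analysis reduces to the one-parameter minimization above. Finally, the SDP itself must be solved to sufficient precision, which is standard via interior-point methods and contributes only polynomial overhead.
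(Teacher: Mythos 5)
Your proposal follows essentially the same route as the paper: the first level of the moment (Lasserre) SDP hierarchy relaxing $\lmax(H)$ itself, Cholesky factorization into unit vectors $v_{ik}$ with orthogonality/anticommutation constraints on each qubit, Gaussian projection down to Bloch vectors in $\reals^{\alpha+\beta+\gamma}$, and a per-edge ratio analysis. The two obstacles you flag are precisely where the paper's execution differs slightly, so it is worth seeing how they are closed. First, your symmetrization step is avoidable: rather than drawing an independent Gaussian per Pauli and then averaging the SDP solution over $SO(2)$ or $SO(3)$ to force $t^X=t^Y=t^Z$, the paper concatenates the active vectors into a single $u_i=(\alpha v_{i1})\circ(\beta v_{i2})\circ(\gamma v_{i3})$ of norm $\sqrt{\alpha+\beta+\gamma}$ and applies \emph{one} random Gaussian matrix $R$ to $x_i=u_i/\enorm{u_i}$; the per-edge expectation then depends only on the single scalar $t=x_i^Tx_j$ with no group-averaging argument at all (your averaging does go through here, since it preserves feasibility and the objective when the averaging group commutes with $\mathrm{diag}(\alpha,\beta,\gamma)$, but it is extra machinery). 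Second, the explicit function you call $F_k$ is the hypergeometric formula of Bri\"et, de Oliveira Filho and Vallentin (Lemma~\ref{l:BFV}), which gives $\mathds{E}[y_i^Ty_j]=F[\alpha+\beta+\gamma,\,x_i^Tx_j]$ exactly; the constants $0.878$, $0.649$, $0.498$ are then the numerical minima of $(1-F[k,t])/(1-kt)$ over $t\in[-1,1/k)$ for $k=1,2,3$. Supplying that lemma and the numerical minimization is the only substantive content missing from your plan; the structure is otherwise the paper's.
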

\begin{proof}
    Suppose $H$ has interaction graph $G=(V,E)$ for $\abs{V}=n$ and edge weights $w_{ij}\geq 0$ for $(i,j)\in E$. We first define a semidefinite programming (SDP) relaxation of $\opt(H)$ via the first level of the Lasserre hierarchy (see, e.g.,~\cite{BH13} for a similar exposition for the setting of low threshold rank graphs). We then show that applying a generalization of the Goemans-Williamson (GW)~\cite{GW95,BFV12} rounding scheme yields the desired result.

    \paragraph*{The SDP.} Each solution of the SDP relaxation will be a ``moment matrix'' $M\in\reals^{3n\times 3n}$, whose rows (resp., columns) are indexed by $2$-tuples $(i,k)\in [n]\times [3]$ (resp., $(j,l)\in [n]\times [3]$) such that ideally, $i$,$j$ denote qubits, and $k,j$ a choice of Pauli matrix from sequence $(\sigma_1,\sigma_2, \sigma_3)=(X,Y,Z)$. Under this interpretation, an ideal solution $M$ corresponds to a density matrix $\rho\in\dens{(\complex^2)^{\otimes n}}$, such that
    \begin{equation}\label{eqn:ideal}
        M(ik,jl)=\trace(\rho\sigma_k^i\sigma_l^j),
    \end{equation}
    where $\sigma_k^i$ corresponds to Pauli operator $\sigma_k$ applied to qubit $i$, i.e. implicitly we have $\sigma_k^i\otimes I_{[n]\setminus\set{i}}$.

    Let us remark about the assumption that $M$ is real. Note that for an ideal solution (i.e. as in Equation~(\ref{eqn:ideal})), $M$ is Hermitian. Indeed, for $i\neq j$, $M(ik,jl)=M(jl,ik)\in\reals$, since the Pauli terms act on different qubits and hence commute. (A similar argument holds for $i=j$ and $k=l$.) If, however, $i=j$ and $k\neq l$, then since the Pauli matrices anti-commute, we have $M(ik,jl)=-M(jl,ik)$, and indeed $M(ik,jl),M(jl,ik)\in\complex\setminus\reals$ (since, e.g., $XY=iZ$), implying $M(ik,jl)^*=M(jl,ik)$ (thus $M$ is Hermitian; here, $^*$ denotes complex conjugate). Note, however, that the case of $i=j$ and $k\neq l$ corresponds to \emph{linear} local terms, i.e. those of the form $\sigma_k^i$, and these are the only non-real entries of $M$. Since our objective function involves only quadratic local terms (i.e. $\sigma_k^i\sigma_l^j$ for $i\neq j$), we can hence eliminate entries of $M$ with $i=j$  and $k\neq l$ by replacing $M$ with moment matrix $M'=(M+M^*)/2$, which is real and matches $M$ on all entries with $i\neq j$ (as well as on $i=j$ and $k=l$).  The real symmetric matrix $M'$ is positive semidefinite if the Hermitian $M$ is, and $M'$ results in an equal objective value to that of $M$, hence the restriction to real moment matrices is without loss of generality.

    We have thus far described the ideal solutions, $M$. Next, we add constraints to the SDP to help enforce this ideal interpretation of $M$:
    \begin{enumerate}
        \item For all $i\in[n], k\in[3]$, set $M(ik,ik)=1$, since ideally $M(ik,ik)=\trace(\rho\sigma_k^i\sigma_k^i)=\trace(\rho)=1$.
        \item For all $i\in[n], k\neq l\in[3]$, set $M(ik,il)=-M(il,ik)$, since distinct Pauli matrices anti-commute.
        \item Set $M\succeq 0$. This is since, ideally, for all $s\in\reals^{3n}$, we have
        \begin{eqnarray}
            s^TMs=\sum_{ijkl}s_{ik}s_{jl}M(ik,jl)=\trace\left(\rho\left(\sum_{ik}s_{ik}\sigma_k^i\right)\left(\sum_{jl}s_{jl}\sigma_l^j\right)\right)=\trace(\rho S^2)\geq 0,
        \end{eqnarray}
        where $S:=\sum_{ik}s_{ik}\sigma_k^i$, and since $\rho,S^2\succeq 0$.
    \end{enumerate}
    Finally, the relaxed objective function is obtained by replacing each term $\trace(\rho\sigma_k^i\sigma_l^j)$ with $M(ik,jl)$. For example, the relaxed objective function for $F_{1,1,1}$ becomes $\sum_{(i,j)\in E}w_{ij}(1-M(i1,j1)-M(i2,j2)-M(i3,j3))$.

    Let us remark that our formulation is essentially the first level $s=1$ of the Lasserre SDP hierarchy. Higher levels $s>1$ are obtained by considering $s$-local terms for the moment matrices, i.e. $M(i_1k_1,\ldots,i_sk_s)=\trace(\rho \sigma_{k_1}^{i_1}\cdots \sigma_{k_s}^{i_s})$.

    \paragraph*{Rounding solutions to the SDP.} Given any solution $M$ to the SDP, we take the Cholesky decomposition of $M$ to obtain a set of vectors $v_{ik}\in\reals^{3n}$ for $i\in[n]$ and $k\in[3]$, such that $M(ik,jl)=v_{ik}^Tv_{jl}$. Since $M(ik,ik)=1$, each $v_{ik}$ is a unit vector. Now, our aim is to round $M$ to a product state solution $\rhoprod=\rho_1\otimes\cdots\otimes \rho_n$ on $n$ qubits. Thus, writing $\rho_i$ in terms of its Bloch vector $\rho_i=(I+r_{i1}X+r_{i2}Y+r_{i3}Z)/2$ each $v_{ik}$ should be thought of as a $3n$-dimensional relaxation of $r_{ik}\in\reals$. For any $v\in\reals^p$, $w\in\reals^q$, define operation
    \[
    v\circ w =
     \begin{cases}
       0 &\quad\text{if $v={0}$ and $w={0}$}\\
       v &\quad\text{if $v\neq {0}$ and $w={0}$} \\
       w &\quad\text{if $w\neq {0}$ and $v={0}$} \\
       (v^T,w^T)^T&\quad\text{otherwise,} \\
     \end{cases}
\]
where $(v^T,w^T)^T\in\reals^{p+q}$ denotes the concatenation of $v$ and $w$. Recalling that $H\in\F$ for $\alpha,\beta,\gamma\in\set{0,1}$, we now set
\[
    u_i:=(\alpha v_{i1})\circ (\beta v_{i2}) \circ (\gamma v_{i3})\in\reals^{(\alpha+\beta+\gamma)3n}.
\]
This yields first that $w_{ij}(1-u_i^Tu_j)$ equals the term in the relaxed SDP objective function for $H$ corresponding to edge $(i,j)\in E$. For example, if $H\in F_{1,1,0}$ (i.e. the local terms are $w_{ij}(I-X_iX_j-Y_iY_j)$), then $u_i\in\reals^{6n}$ and for edge $(i,j)\in E$ we have $M(i1,j1)+M(i2,j2)=u_i^Tu_j$. Second, we have $\enorm{u_i}=\sqrt{\alpha+\beta+\gamma}$.

    To obtain the desired claim, define now $x_i=u_i/\enorm{u_i}$. We use a generalization of the Goemans-Williamson (GW)~\cite{GW95} rounding procedure due to Bri\"{e}t, de Oliveira Filho and Vallentin~\cite{BFV12}. Specifically, we randomly round each $x_i\in\reals^{(\alpha+\beta+\gamma)3n}$ to a Bloch vector $y_i\in\reals^{\alpha+\beta+\gamma}$ as follows. Let $R$ be a random $(\alpha+\beta+\gamma)\times (\alpha+\beta+\gamma)3n$ matrix, each of whose entries is chosen independently from a standard normal distribution with mean 0 and variance 1. Then, for each $i$, set
    \[
        y_i=Rx_i/\enorm{Rx_i}\in\reals^{\alpha+\beta+\gamma}.
    \]
    We map this to a (pure) single-qubit state $\rho_i$ as follows. Let $I(k)$ be the index in sequence $(\alpha,\beta,\gamma)$ of the $k$th non-zero entry (if it exists), for $k\in\set{1,2,3}$. Then, set the $I(k)$-th Bloch vector entry of $\rho_i$ to $y_{i,k}$. For example, if $\alpha=\beta=\gamma=1$, this yields $\rho_i=(I+y_{i,1}X+y_{i,2}Y+y_{i,3}Z)/2$, if $\alpha=\beta=1$ and $\gamma=0$, this yields $\rho_i=(I+y_{i,1}X+y_{i,2}Y)/2$, and if $\alpha=\beta=0$ and $\gamma=1$, this yields $\rho_i=(I+y_{i,1}Z)/2$ (note the subscript $1$ in $y_{i,1}$). For ease of exposition, henceforth we refer to the Bloch vector for $\rho_i$ as $\ve{r_i}=(r_1,r_2,r_3)$, where the entries of $\ve{r_i}$ which are not set in the rounding scheme above being implicitly set to $0$. For example, $\ve{r_i}=(y_{i,1},y_{i,2},y_{i,3})$, $\ve{r_i}=(y_{i,1},y_{i,2},0)$, and $\ve{r_i}=(0,0,y_{i,1})$, respectively, in the examples above.

    \paragraph*{Approximation ratio.} To analyze the approximation ratio obtained, note that for edge $(i,j)\in E$, we have
    \begin{eqnarray*}
        w_{ij}\trace(H_{ij}\rhoprod)&=&\frac{w_{ij}}{4}\trace(H_{ij}(I+r_{i,1}X_i+r_{i,2}Y_i+r_{i,3}Z_i)(I+r_{j,1}X_j+r_{j,2}Y_j+r_{j,3}Z_j))\\
        &=&\frac{w_{ij}}{4}\trace((I-\alpha X_i X_j-\beta Y_i Y_j -\gamma Z_i Z_j)\cdot\\&&\hspace{11mm}(I+r_{i,1}X_i+r_{i,2}Y_i+r_{i,3}Z_i)(I+r_{j,1}X_j+r_{j,2}Y_j+r_{j,3}Z_j))\\
        &=&w_{ij}(1-\alpha r_{i,1}r_{j,1}-\beta r_{i,2}r_{j,2}-\gamma r_{i,3}r_{j,3})\\
        &=&w_{ij}(1-y_i^T y_j).
    \end{eqnarray*}
    On the other hand, recall the SDP obtains value $w_{ij}(1-u_i^Tu_j)$ on edge $(i,j)\in E$. For brevity, let $F[r,\, u^T v]$ denote the right hand side of Equation~\ref{eqn:BFV} (Lemma~\ref{l:BFV} in Appendix~\ref{app:lemmas}).  A direct application of Lemma~\ref{l:BFV} yields $\mathds{E}[y_i^T y_j]= F[\alpha+\beta+\gamma,\,x_i^T x_j]$. Then, by linearity of expectation, the expected approximation ratio is given by the expected ratio attained on each edge, which is
    \[
        \frac{1-\mathds{E}[y_i^T y_j]}{1-u_i^T u_j}=\frac{1-F[\alpha+\beta+\gamma,\,x_i^T x_j]}{1-u_i^T u_j}=\frac{1-F[\alpha+\beta+\gamma,\,t]}{1-(\alpha+\beta+\gamma)t},
    \]
     where we defined $t=x_i^T x_j$ (note the value of $F$ only depends on $t$; see Appendix~\ref{app:lemmas}). Numerically evaluating via Mathematica (see Appendix~\ref{app:lemmas} for Mathematica code)
     \[
        \min_{t\in[-1,1/(\alpha+\beta+\gamma))} \frac{1-F[t]}{1-(\alpha+\beta+\gamma)t},
     \]
     we obtain ratios of $0.878$ (for $\alpha+\beta+\gamma=1$), $0.649$ (for $\alpha+\beta+\gamma=2$), and $0.498$ (for $\alpha+\beta+\gamma=3$), respectively. Note we minimize over $t\in[-1,1/(\alpha+\beta+\gamma))$, since for $t\in[1/(\alpha+\beta+\gamma),1]$ the ratio can only be negative (the denominator is negative, and the numerator is in range $[0,2]$). This completes the proof.
\end{proof}

\subsection{Generalizations beyond the Heisenberg model}\label{sscn:beyondH}

We defined the Heisenberg model $\F$ in Section~\ref{scn:prelims} as having all constraints identical with some fixed $(\alpha,\beta,\gamma)\in\set{0,1}^3$. We now show how to extend the algorithm to two more general settings: The first will allow different choices of $\alpha_{ij},\beta_{ij},\gamma_{ij}\in[1,-1]$ on each edge (note the use of $[1,-1]$ instead of $\set{0,1}$), and the second will require that all constraints remain identical but in exchange allows new interaction terms beyond $XX$, $YY$, $ZZ$.

\subsubsection{Approximating Heisenberg models with varying Pauli weights}
\label{ssscn:varyweights}

The approximation algorithm developed in the previous section made critical use of the fact that $\alpha, \beta, \gamma \in \{0,1\}$ for our Heisenberg model $\F$.  Here, we generalize by allowing two relaxations, captured below in the form of constraints now allowed:
\begin{equation*}
    H_{ij}=w_{ij}(I-\alpha_{ij} X_i X_j-\beta_{ij} Y_i Y_j -\gamma_{ij} Z_i Z_j),
\end{equation*}
where $\alpha_{ij},\beta_{ij},\gamma_{ij} \in [-1,1]$. The two relaxations to note are (1) $\alpha_{ij},\beta_{ij},\gamma_{ij}\in[-1,1]$ instead of in $\set{0,1}$, and (2) each edge $(i,j)\in E$ may have a different choice of $\alpha_{ij},\beta_{ij},\gamma_{ij}$ In this setting, we shall use the same relaxation as Section~\ref{sscn:tight}, but utilize another rounding strategy. In exchange for the added generality, the approximation ratios obtained are slightly weaker than those of Section~\ref{sscn:tight}.

In the theorem below, for brevity we call the sets $\set{\alpha_{ij}},\set{\beta_{ij}},\set{\gamma_{ij}}$ \emph{parameter families}. We say a parameter family is \emph{non-zero} if at least one parameter in the family is non-zero, e.g. there exists $(i,j)\in E$ such that $\alpha_{ij}\neq 0$ for family $\set{\alpha_{ij}}$.

\begin{theorem}\label{thm:alg2}
    Let $H=\sum_{(i,j)\in E}H_{ij}$ be a $2$-local Hamiltonian on qubits with constraints \[
        H_{ij}=w_{ij}(I-\alpha_{ij} X_i X_j-\beta_{ij} Y_i Y_j -\gamma_{ij} Z_i Z_j),
    \]
    where $\alpha_{ij},\beta_{ij},\gamma_{ij} \in [-1,1]$ and $w_{ij}\in\reals^+$. There exists a randomized, polynomial-time algorithm which obtains approximation ratio at least $0.878$ (if precisely one parameter family is non-zero), $0.609$ (if precisely two parameter families are non-zero), and $0.462$ (if all three parameter families are non-zero).
\end{theorem}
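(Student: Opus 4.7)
The plan is to reuse the moment-matrix SDP relaxation of Theorem~\ref{thm:alg1} unchanged, since its value $\sum_{(i,j)\in E} w_{ij}(1 - \alpha_{ij} v_{i,1}^T v_{j,1} - \beta_{ij} v_{i,2}^T v_{j,2} - \gamma_{ij} v_{i,3}^T v_{j,3})$ upper bounds $\opt(H)$ whether the coefficients lie in $\set{0,1}$ or in $[-1,1]$. What must change is the rounding: the concatenation $u_i = (\alpha v_{i,1}) \circ (\beta v_{i,2}) \circ (\gamma v_{i,3})$ of Theorem~\ref{thm:alg1} produces uniform-norm vectors only when $\alpha,\beta,\gamma$ are the same on every edge, so that trick breaks here. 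Instead, I would process each Pauli direction independently.

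Let $S \subseteq \set{1,2,3}$ be the set of \emph{active} directions (those parameter families containing a nonzero entry) and set $d = |S|$. For each $k \in S$, sample an independent Gaussian $g_k \sim \mathcal{N}(0, I_{3n})$, perform standard Goemans-Williamson hyperplane rounding to obtain $s_{i,k} = \operatorname{sgn}(g_k^T v_{i,k}) \in \set{-1,+1}$, and set the Bloch coordinate $r_{i,k} = s_{i,k}/\sqrt{d}$ for $k \in S$ and $r_{i,k} = 0$ otherwise. The $1/\sqrt{d}$ scaling makes $\|\ve{r}_i\| = 1$, so $\rho_i = (I + \sum_k r_{i,k}\sigma_k)/2$ is a valid pure single-qubit state. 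Since the $g_k$ are independent, $\mathds{E}[r_{i,k} r_{j,l}] = 0$ for $k \neq l$, while the classical GW identity yields $\mathds{E}[r_{i,k} r_{j,k}] = (2/(\pi d))\arcsin(v_{i,k}^T v_{j,k})$. Writing $\rho_k := v_{i,k}^T v_{j,k}$ and letting $c_k \in [-1,1]$ denote the edge's coefficient in direction $k$, the expected product-state contribution of edge $(i,j)$ is $w_{ij}\bigl(1 - (2/(\pi d))\sum_{k \in S} c_k \arcsin\rho_k\bigr)$ while the SDP contribution is $w_{ij}\bigl(1 - \sum_{k \in S} c_k \rho_k\bigr)$. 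Note that the product-state contribution is always nonnegative, since $\bigl|(2/(\pi d))\sum_k c_k \arcsin\rho_k\bigr| \le 1$; this conveniently allows the per-edge ratio bound to be summed even when some SDP contributions are negative (on such edges, $\mathds{E}[\text{PS}_e] \ge 0 \ge r \cdot \text{SDP}_e$ holds trivially), yielding the global bound $\mathds{E}[\text{PS}] \ge r \cdot \text{SDP} \ge r \cdot \opt(H)$.

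It thus remains to lower bound
\[
    \min_{c,\rho}\,\frac{1 - (2/(\pi d))\sum_{k} c_k \arcsin\rho_k}{1 - \sum_{k} c_k \rho_k}
\]
over $c_k \in [-1,1]$ and $\rho_k \in [-1,1]$. Using the oddness of $\arcsin$ one may absorb signs of $c_k$ into $\rho_k$ and assume $c_k \ge 0$; a short derivative computation at any candidate minimum then shows the ratio is nonincreasing in each $c_k$, placing the extremum at $c_k = 1$ for all $k \in S$. A symmetry/convexity argument further collapses the $d$-dimensional minimization to the diagonal case $\rho_1 = \cdots = \rho_d = \rho$, reducing the problem to numerically minimizing $(1 - (2/\pi)\arcsin\rho)/(1 - d\rho)$ over $\rho \in [-1, 1/d)$. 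Performing the same kind of Mathematica evaluation used for Theorem~\ref{thm:alg1} yields the advertised ratios $0.878$, $0.609$, $0.462$ for $d = 1,2,3$ respectively. The main obstacle is rigorously justifying the reduction to the diagonal worst case: the monotonicity in $c_k$ is the cleaner piece, while establishing that the symmetric point $\rho_1 = \cdots = \rho_d$ is a global minimum of the multivariate expression requires either a Hessian/convexity verification at the symmetric candidate, or a direct sufficiently fine numerical sweep over $(\rho_1, \ldots, \rho_d)$.
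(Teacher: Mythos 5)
Your proposal is correct in outline but takes a genuinely different route from the paper, and the one step you flag as the ``main obstacle'' needs a better justification than either of the two options you offer. The paper does not round all Pauli directions at once: it first writes each $(\alpha_{ij},\beta_{ij},\gamma_{ij})\in[-1,1]^3$ as a convex combination of vertices of $\set{-1,1}^3$ (via Carath\'eodory, splitting each edge into parallel edges), then picks the single Pauli direction whose total contribution to the SDP objective is largest, so that e.g.\ $\sum_{(i,j)}w_{ij}(1-3\gamma_{ij}v_{i3}^Tv_{j3})\ge w_{\mathrm{SDP}}$, and applies ordinary Goemans--Williamson hyperplane rounding to that one direction, outputting a basis-state product assignment on which the other two Pauli terms vanish identically. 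Its per-edge worst case is $\min_t(1-\tfrac{2}{\pi}\arcsin t)/(1-dt)$ with $d\in\set{1,2,3}$ --- exactly the one-variable function your diagonal reduction arrives at, so the two analyses provably yield the same constants. Your rounding (independent hyperplanes in each active direction, Bloch vectors scaled by $1/\sqrt d$) is arguably more natural and avoids the convex decomposition entirely; the price is that your worst case is a $d$-variable minimization that must be collapsed to the diagonal, whereas the paper's is one-dimensional from the start.

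That collapse is where you should be careful. The reduction in the $c_k$'s is fine: for fixed $\rho$ the ratio is a quotient of functions affine in each $c_k$, hence monotone in $c_k$, so the minimum sits at $c_k=\pm1$ and oddness of $\arcsin$ lets you take $c_k=1$. But a Hessian check at the symmetric candidate only certifies a \emph{local} minimum, and the stronger statement that symmetrization helps for each fixed sum $\sum_k\rho_k=s$ is actually \emph{false}: for $d=3$ and $s=0.2$, the point $(1,-0.4,-0.4)$ gives a strictly smaller ratio than $(s/3,s/3,s/3)$, because $\arcsin$ is convex rather than concave on $[0,1]$. The clean fix is linearization plus separability. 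The claim ``ratio $\ge r$'' is equivalent to $N-rD\ge0$ on $\set{D>0}$ (and is trivial where $D\le0$ since $N\ge0$, as you note), and
\[
N-rD=(1-r)-\sum_{k\in S}\Bigl(\tfrac{2}{\pi d}\,c_k\arcsin\rho_k-r\,c_k\rho_k\Bigr)
\]
is a sum of identical single-variable functions of $(c_k,\rho_k)$. Hence $N-rD\ge0$ everywhere iff $d\cdot\max_{t\in[-1,1]}\bigl(\tfrac{2}{\pi}\arcsin t-rdt\bigr)/d\le 1-r$, i.e.\ iff $1-\tfrac{2}{\pi}\arcsin t\ge r(1-dt)$ for all $t$, which is precisely the statement that your one-dimensional diagonal minimum is at least $r$. (This also disposes of edges on which some $c_k=0$ even though that parameter family is globally active, since the corresponding summand is $0$, which is at most the single-variable maximum.) With that substitution your argument is complete and matches the paper's constants $0.878$, $0.609$, $0.462$.
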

\begin{proof} We begin by mapping $H$ to a ``canonical'' form.

\paragraph*{Setup in ``canonical'' form.} For now, assume $\alpha_{ij},\beta_{ij},\gamma_{ij} \not=0$ (later we will get improved ratios when some of these values are 0 for every $(i,j) \in E$).  Our first observation is that we may assume $\alpha_{ij},\beta_{ij},\gamma_{ij} \in \{-1,1\}$.  This is because any vector $(\alpha_{ij},\beta_{ij},\gamma_{ij}) \in [-1,1]^3$ is a convex combination of vectors with coordinates in $\{-1,+1\}$ (i.e. the former lies in the convex hull of discrete points $(x,y,z)\in\set{-1,1}^3$).  Thus any $H_{ij}$ of the above form may be expressed as convex combination,
\begin{equation}\label{eqn:convex}
    H_{ij} = \sum_{k=1}^4 w_{ij} \lambda_k (I-\alpha_{ij,k} X_i X_j-\beta_{ij,k} Y_i Y_j -\gamma_{ij,k} Z_i Z_j),
\end{equation}
with $\alpha_{ij,k},\beta_{ij,k},\gamma_{ij,k} \in \{-1,1\}$, and $\lambda_k\geq 0$ with $\sum_{k=1}^4\lambda_k=1$. Notes: (1) Since we allow multiple edges between $i$ and $j$, we may include an edge for each term of the convex combination. (2) That we require at most $4$ terms $\lambda_k$ follows from Carath\'{e}odory's theorem, which says that a point in $\reals^d$ in the convex hull of some set $P$ requires at most $d+1$ points of $P$ to express as a convex combination. (3) Our approximation ratio analysis below will again be via expectation per edge, which by linearity of expectation yields that no loss in approximation is incurred by writing our constraints as in Equation~(\ref{eqn:convex}).

\paragraph*{Rounding algorithm.} We employ the same moment SDP relaxation as in Section~\ref{sscn:tight}, and continue to use the terminology therein. Consider the vectors $v_{i1},v_{i2},v_{i3} \in \reals^{3n}$ corresponding to an optimal solution of the SDP relaxation.  The objective value of the relaxation is
$w_{\text{SDP}}:= \sum_{(i,j) \in E} w_{ij}(1 -\alpha_{ij} v_{i1}^T v_{j1} -\beta_{ij} v_{i2}^T v_{j2} -\gamma_{ij} v_{i3}^T v_{j3})$.
Now suppose, without loss of generality (any other ordering is handled analogously):
\begin{equation*}
-\sum_{(i,j) \in E} w_{ij}\gamma_{ij} v_{i3}^T v_{j3} \geq -\sum_{(i,j) \in E} w_{ij}\beta_{ij} v_{i2}^T v_{j2} \geq -\sum_{(i,j) \in E} w_{ij}\alpha_{ij} v_{i1}^T v_{j1},
\end{equation*}
so that
\begin{equation}\label{eq:SDP-obj-bound}
\sum_{(i,j) \in E} w_{ij}(1 -3\gamma_{ij} v_{i3}^T v_{j3}) \geq w_{\text{SDP}}.
\end{equation}
Recall that the $v_{i3}$ are unit vectors (since our SDP had constraint $M(ik,ik)=1$ for all $i,k$). Hence, we may view the $v_{i3}$ as a feasible solution for the Max-Cut SDP relaxation of Goemans and Williamson and consequently, use the same rounding algorithm~\cite{GW95}:
\begin{enumerate}
\item Select a random vector $r \in \reals^{3n}$ with each entry independently and normally distributed with mean 0 and variance 1.
\item Let $r_i=r^T v_{i3}/|r^T v_{i3}| \in \{-1,1\}$.
\item Output the product state, $\prod_i \frac{1}{2}(I+r_iZ_i)$.
\end{enumerate}
Note that since the assignment above is diagonal in the $Z$ basis (i.e. is a standard basis state), it lies in the null space of each $XX$ and $YY$ term of our Hamiltonian. Consequently, our expected objective value for this assignment on our Hamiltonian is
\[
    w_{\text{EXP}}:=\sum_{(i,j) \in E} \mathds{E}[ w_{ij}(1 -\gamma_{ij} r_i r_j)] = \sum_{(i,j) \in E} w_{ij}(1 -\gamma_{ij} 2\arcsin(v_{i3}^T v_{j3})/\pi),
\]
where the second equality follows by (1) linearity of expectation and (2) the standard analysis of the Goemans-Williamson algorithm~\cite{GW95}, which states that $E[r_ir_j]=2\arcsin(v_{i3}^Tv_{j3})/\pi$.

\paragraph*{Approximation ratio.} We conclude by bounding the expected approximation ratio, $w_{\text{EXP}}/w_{\text{SDP}}$.  As for the analysis of the algorithm from the previous section, we need only consider the worst-case behavior on any edge.  Using \eqref{eq:SDP-obj-bound}, this is:
   \[
        \min_{\gamma \in \{-1,1\},\,t\in[-1,1]: 3\gamma t <1} \frac{1 -\gamma 2\arcsin(t)/\pi}{1 -3\gamma t},
     \]
where $\gamma$ represents $\gamma_{ij}$, and $t$ represents $v_{i3}^T v_{j3}$.  Numerically, this yields a ratio of 0.462.  A similar analysis produces an approximation ratio of 0.609 for the case when either $\alpha_{ij} = 0$ for all $(i,j)\in  E$, $\beta_{ij} = 0$ for all $(i,j)\in  E$, or $\gamma_{ij} = 0$ for all $(i,j)\in  E$.  We recover the Goemans-Williamson 0.878-approximation in the case when two of these parameters are 0 for all $(i,j) \in E$.
\end{proof}

\subsubsection{Reductions via local unitaries}\label{ssscn:localu}

We now generalize the algorithm of Section~\ref{sscn:tight} in a different manner. Specifically, using a standard trick from entanglement theory (used also in~\cite{CM16} in a somewhat different manner), we may give an approximation-preserving reduction to the Heisenberg model in certain cases. Namely, recall that any two-qubit Hermitian operator $H_{ij}$ can be expanded in the Pauli basis as follows (sometimes known as the \emph{Fano form}~\cite{F83}), given by:
\begin{equation}\label{eqn:fano}
    H_{ij} = \kappa I+ \sum_{a=1}^3\sum_{b=1}^3 M_{ab}\sigma_a\otimes\sigma_b +\sum_{a=1}^3 r_a\sigma_a\otimes I + \sum_{b=1}^3s_b I\otimes \sigma_b,
\end{equation}
where $\kappa,M_{ab},r_a,s_b\in\reals$. The $3\times 3$ real matrix $M$, which has no particular structure in general (for example, it need not be diagonalizable), is called the \emph{correlation matrix} in entanglement theory.

\begin{theorem}\label{thm:broader}
    Let $H$ be a $2$-local Hamiltonian on $n$ qubits, and with directed interaction graph $G=(V,E)$, where $H=\sum_{(i,j)\in E}w_{ij}H_{ij}$ for non-negative real weights $w_{ij}$. Assume
    \begin{enumerate}
        \item all $H_{ij}$ are identical with $\kappa=r_1=r_2=r_3=s_1=s_2=s_3=0$, and
        \item the correlation matrix $M$ of $H_{ij}$ is an orthogonal projection (i.e. $M$ is symmetric with $M^2=M$).
    \end{enumerate}
    Then, there exists a randomized, polynomial-time algorithm which obtains approximation ratios at least $0.878$, $0.649$ or $0.498$,  when the rank of $M$ equals $1$, $2$ or $3$, respectively. Conversely, the best possible product-state ratio (not necessarily efficiently attainable) in each case is $1$, $2/3$, and $1/2$, respectively.
\end{theorem}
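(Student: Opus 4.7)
The plan is to reduce $H$, via a local unitary change of basis, to an instance of the Heisenberg model in $F_{\alpha,\beta,\gamma}$, and then invoke Theorem~\ref{thm:alg1} for the algorithmic ratios and Corollary~\ref{cor:ratio2} for the converse. The key tool is the two-to-one covering map $SU(2)\to SO(3)$: for every rotation $R\in SO(3)$ there exists $V\in SU(2)$ such that $V\sigma_aV^\dagger=\sum_b R_{ba}\sigma_b$, and hence conjugating any two-qubit operator $\sum_{ab}M_{ab}\sigma_a\otimes\sigma_b$ by $V\otimes V$ acts on its correlation matrix as $M\mapsto RMR^T$.

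Since $M$ is symmetric and satisfies $M^2=M$, the spectral theorem writes $M=UDU^T$ with $U\in O(3)$ and $D=\operatorname{diag}(d_1,d_2,d_3)$, where $d_a\in\{0,1\}$ and $\sum_a d_a=r$. Because swapping two columns of $U$ whose corresponding diagonal entries of $D$ coincide preserves the decomposition (up to relabeling $D$) while flipping $\det U$, and such a pair always exists for $r\in\{0,1,2,3\}$ in $\reals^3$, we may assume $U\in SO(3)$. Let $V\in SU(2)$ be the lift of $R=U^T$ guaranteed by the double cover. Since every $H_{ij}$ carries the same $M$, applying the single global local unitary $V^{\otimes n}$ yields
\[
    \widetilde H=V^{\otimes n}HV^{\otimes n\,\dagger}=\sum_{(i,j)\in E}w_{ij}\sum_a D_{aa}\,\sigma_a^i\sigma_a^j,
\]
each of whose edge terms is a sum of exactly $r$ interactions of the form $\sigma_a\otimes\sigma_a$. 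Up to the additive identity shift implicit in the Heisenberg maximization form, $\widetilde H$ is an instance of $F_{\alpha,\beta,\gamma}$ with exactly $r$ of the parameters $(\alpha,\beta,\gamma)\in\{0,1\}^3$ equal to $1$.

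Running the algorithm of Theorem~\ref{thm:alg1} on this Heisenberg instance yields a product state achieving the ratios $0.878$, $0.649$, or $0.498$ for $r=1,2,3$ respectively; conjugating back by $V^{\otimes n\,\dagger}$ produces a product state for $H$ with the same guarantee, since local unitary conjugation simultaneously preserves the product structure, the spectrum, and expectation values. The converse bounds $1$, $2/3$, $1/2$ are inherited from Corollary~\ref{cor:ratio2} through the same local unitary equivalence. The main technical obstacle is the interplay between the additive identity shift (distinguishing $\sum_a D_{aa}\sigma_a\sigma_a$ from the Heisenberg form $I-\sum_a D_{aa}\sigma_a\sigma_a$) and the multiplicative approximation ratio; I would handle this by re-running the per-edge SDP-rounding analysis from the proof of Theorem~\ref{thm:alg1} directly on $\widetilde H$, since that argument treats each edge independently and its guarantee depends only on the quadratic Pauli structure, so additive shifts cancel cleanly.
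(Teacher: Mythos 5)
Your proof is correct and follows essentially the same route as the paper's: diagonalize the symmetric projection $M$ by an orthogonal rotation realized as a local $SU(2)$ conjugation of every edge term, thereby mapping $H$ into the family $F_{\alpha,\beta,\gamma}$ with exactly $\operatorname{rank}(M)$ parameters equal to $1$, then invoke Theorem~\ref{thm:alg1} for the algorithmic ratios and Corollary~\ref{cor:ratio2} for the converse, noting that single-qubit unitaries preserve product structure and ratios. Your added care about forcing $U\in SO(3)$ and about the identity-shift/sign bookkeeping is detail the paper elides, but it does not change the argument.
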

\begin{proof}
    We use the approach of~\cite{HH96,HH96_2} of simulating orthogonal rotations on $M$ via local unitary operations on $H_{ij}$. Namely, due to the surjective homomorphism from SU(2) to SO(3), if one wishes to map $M$ to $O_1MO_2^T$ for orthogonal matrices $O_1$ and $O_2$, there exist single-qubit unitaries $U$ and $V$ such that $U_i\otimes V_j H_{ij} U_i^\dagger\otimes V_j^\dagger$ has correlation matrix $O_1MO_2^T$. Since $M$ is symmetric, it is diagonalizable by an orthogonal matrix $O\in \reals^{3\times 3}$ (Corollary 2.5.14 of~\cite{HJ90}). Thus, there exists a single-qubit unitary $U$ such that $U_i\otimes U_j H_{ij} U_i^\dagger\otimes U_j^\dagger$ has a diagonal correlation matrix with eigenvalues from set $\set{0,1}$. Since all constraints $H_{ij}$ are identical, it follows that $U^{\otimes n} H (U^\dagger)^{\otimes n}$ is a Hamiltonian in family $\F$ for some $\alpha,\beta,\gamma\in\set{0,1}$. The algorithm of Theorem~\ref{thm:alg1} now yields the claimed lower bound on approximation. The claimed upper bound on approximation follows from Corollary~\ref{cor:ratio2}. In both cases, we are leveraging the fact that our reduction applies only single-qubit unitary operations, and hence perfectly preserves approximation ratios attained by product states.
\end{proof}
Note that Theorem~\ref{thm:broader} uses the algorithm of Section~\ref{sscn:tight}. If we are willing to obtain slightly worse approximation ratios, we can relax the second requirement of Theorem~\ref{thm:broader} by instead applying the algorithm of Section~\ref{ssscn:varyweights}.

\begin{theorem}\label{thm:broader2}
    Let $H$ be a $2$-local Hamiltonian on $n$ qubits, and with directed interaction graph $G=(V,E)$, where $H=\sum_{(i,j)\in E}w_{ij}H_{ij}$ for non-negative real weights $w_{ij}$. Assume
    \begin{enumerate}
        \item all $H_{ij}$ are identical with $\kappa=r_1=r_2=r_3=s_1=s_2=s_3=0$, and
        \item the correlation matrix $M$ of $H_{ij}$ is symmetric.
    \end{enumerate}
    Then, there exists a randomized, polynomial-time algorithm which obtains approximation ratios at least $0.878$, $0.609$ or $0.462$,  when the rank of $M$ equals $1$, $2$ or $3$, respectively. Conversely, the best possible product-state ratio (not necessarily efficiently attainable) in each case is $1$, $2/3$, and $1/2$, respectively.
\end{theorem}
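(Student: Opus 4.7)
The plan is to imitate the proof of Theorem~\ref{thm:broader}, invoking Theorem~\ref{thm:alg2} in place of Theorem~\ref{thm:alg1} --- the natural move once the projection hypothesis on $M$ is dropped, so that the eigenvalues of $M$ are no longer confined to $\set{0,1}$. Since $M$ is real symmetric, there exist an orthogonal $O \in \reals^{3\times 3}$ and reals $\lambda_1,\lambda_2,\lambda_3$ with $OMO^T = \operatorname{diag}(\lambda_1,\lambda_2,\lambda_3)$, and exactly $\operatorname{rank}(M)$ of the $\lambda_k$ are nonzero. As in Theorem~\ref{thm:broader}, the surjective homomorphism $SU(2) \twoheadrightarrow SO(3)$ yields a single-qubit unitary $U$ such that conjugation by $U \otimes U$ carries the correlation matrix of $H_{ij}$ from $M$ to $OMO^T$. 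Since the $H_{ij}$ are identical, $H' := U^{\otimes n} H (U^\dagger)^{\otimes n}$ has every local term of the form $w_{ij}(\lambda_1 X_i X_j + \lambda_2 Y_i Y_j + \lambda_3 Z_i Z_j)$.

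To feed $H'$ into Theorem~\ref{thm:alg2}, rescale by $\wmax := \max_k |\lambda_k|$ (the degenerate case $\wmax = 0$ being trivial). Define $\tilde{w}_{ij} := w_{ij}\wmax$, $\alpha_{ij} := -\lambda_1/\wmax$, $\beta_{ij} := -\lambda_2/\wmax$, $\gamma_{ij} := -\lambda_3/\wmax$; these lie in $[-1,1]$, and exactly $\operatorname{rank}(M)$ of them are nonzero. Because all edges share the same $M$, this count is uniform over edges, so the number of nonzero parameter families in the sense of Theorem~\ref{thm:alg2} equals $\operatorname{rank}(M)$. Running Theorem~\ref{thm:alg2}'s algorithm on the rescaled instance yields a product state with approximation ratio at least $0.878$, $0.609$, or $0.462$ when $\operatorname{rank}(M)$ is $1$, $2$, or $3$, respectively; conjugating this state by $(U^\dagger)^{\otimes n}$ transports the guarantee back to $H$, since local unitaries preserve both the class of product states and their approximation ratios. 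For the matching upper bounds $1$, $2/3$, $1/2$, Corollary~\ref{cor:ratio2} is applied along the same reduction chain: each diagonalized local term matches (up to sign, scale, and additive identity) a clause of $\F$ with $\alpha + \beta + \gamma = \operatorname{rank}(M)$, so the best achievable product-state ratio for $H$ is at most $2/(1+\operatorname{rank}(M))$.

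The main subtlety requiring care is reconciling Theorem~\ref{thm:alg2}'s identity-per-edge input format with the hypothesis $\kappa = 0$ in the Fano form of our $H_{ij}$: matching formats entails adding a uniform $\tilde w_{ij} I$ per edge, i.e., a global identity shift of the Hamiltonian. As in Theorem~\ref{thm:broader}, this is benign, because both the per-edge SDP analysis underlying Theorem~\ref{thm:alg2} and the product-state upper bound computation for Corollary~\ref{cor:ratio2} are carried out on the identity-included (i.e., $\F$-style) form, so the same shift is applied consistently on both sides of the ratio and no approximation guarantee is lost in translation.
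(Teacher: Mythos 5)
Your algorithmic argument is, modulo the extra bookkeeping, exactly the paper's proof: the paper disposes of Theorem~\ref{thm:broader2} in one line, stating that it is identical to the proof of Theorem~\ref{thm:broader} with the rounding of Section~\ref{ssscn:varyweights} (Theorem~\ref{thm:alg2}) substituted for that of Section~\ref{sscn:tight}. Your explicit rescaling by $\max_k\abs{\lambda_k}$ to force the parameters into $[-1,1]$, your sign-matching against the $I-\alpha XX-\beta YY-\gamma ZZ$ format, and your observation that the number of non-zero parameter families equals $\operatorname{rank}(M)$ uniformly over edges are details the paper leaves implicit but which are needed and correct. Your treatment of the identity shift is likewise consistent with the paper's (equally implicit) convention that the guarantees refer to the identity-included maximization form.

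The one place your writeup overreaches is the ``conversely'' clause. You assert that each diagonalized local term ``matches (up to sign, scale, and additive identity) a clause of $\F$ with $\alpha+\beta+\gamma=\operatorname{rank}(M)$,'' and from this conclude a \emph{per-instance} upper bound of $2/(1+\operatorname{rank}(M))$. That matching holds only when the non-zero eigenvalues of $M$ all have the same magnitude. For, say, $M=\operatorname{diag}(1,\varepsilon,0)$ the rank is $2$, but Corollary~\ref{cor:ratio} gives a single-edge product-state ratio of $2/(2+\varepsilon)\rightarrow 1$, not at most $2/3$. So the stated bounds of $1$, $2/3$, $1/2$ cannot be per-instance statements; they are worst-case bounds over the class of instances with $\operatorname{rank}(M)=r$, and the correct one-line justification is that this class contains the rank-$r$ orthogonal projections, for which Corollary~\ref{cor:ratio2} (as used in Theorem~\ref{thm:broader}) already certifies the bound $2/(1+r)$. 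With that repair the proposal matches the paper's intended argument.
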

\noindent The proof is identical to that of Theorem~\ref{thm:broader}, except we using the rounding algorithm of Section~\ref{ssscn:varyweights} instead; we hence omit the proof.

%

\bibliography{approxvar}

\appendix
\section{Max Cut as a special case of the Heisenberg model}\label{app:MC}

 We briefly sketch why local constraints $H_{ij}=I-Z_i\otimes Z_j$ in the Heisenberg model yield the NP-complete problem Max Cut. Namely, the Pauli $Z$ operator
 \[
    Z=\left(
        \begin{array}{cc}
          1 & 0 \\
          0 & -1 \\
        \end{array}
      \right)
 \]
 is diagonal in the standard basis with eigenvalues $1$ for eigenvector $\ket{0}$ and $-1$ for eigenvector $\ket{1}$. It follows that $Z\otimes Z$ also diagonalizes in the standard basis, with eigenvectors $\ket{00}$ and $\ket{11}$ attaining eigenvalue $1$ and $\ket{01}$ and $\ket{10}$ attaining eigenvalue $-1$. As a result, operator $I-Z_i\otimes Z_j$ has eigenvalues $0$ (with eigenspace spanned by $\ket{00}$ and $\ket{11}$) and $2$ (with eigenspace spanned by $\ket{01}$ and $\ket{10}$). But this means that on each edge $(i,j)\in E$,
 \begin{eqnarray*}
    \bra{00}I-Z_i\otimes Z_j\ket{00}&=&\bra{11}I-Z_i\otimes Z_j\ket{11}=0,\text{ and }\\
    \bra{01}I-Z_i\otimes Z_j\ket{01}&=&\bra{10}I-Z_i\otimes Z_j\ket{10}=2.
 \end{eqnarray*}
 In other words, if neighboring qubits are set to opposing standard basis states (e.g. $\ket{01}$), then we obtain value $2$ from an edge, and if the qubits are set to identical standard basis states (e.g. $\ket{00}$), we obtain value $0$ from this edge. Finally, since all local terms are diagonal in the standard basis, the entire Hamiltonian $H=\sum_{(i,j)\in E}H_{ij}$ will also be diagonal in the standard basis. The largest eigenvalue of $H$ will hence be the sum of the values obtained on each edge by the best standard basis state, which will correspond to a maximum cut in the graph. The actual largest eigenvalue will equal twice the maximum cut on the underlying graph (since we obtain value $2$ on each cut edge, rather than $1$ as for the standard Max Cut problem).

\section{Proofs for Section~\ref{scn:prelims}}\label{app:sec2proof}
\begin{proof}[Proof of Lemma~\ref{l:optprod}]
    Observe that for standard basis vectors $\ket{i},\ket{j},\ket{k},\ket{l}\in\complex^2$, we have
    \begin{eqnarray}
        \bra{ij}X\otimes X\ket{kl}&=&(i\oplus k)(j\oplus l),\label{eqn:1}\\
        \bra{ij}Y\otimes Y\ket{kl}&=&(-1)^{\delta_{kl}}(i\oplus k)(j\oplus l),\label{eqn:2}\\
        \bra{ij}Z\otimes Z\ket{kl}&=&(-1)^{k\oplus l}\delta_{ik}\delta_{jl}\label{eqn:3},
    \end{eqnarray}
    were $\delta_{ij}$ is the usual Kronecker delta. Denoting an arbitrary product state as $\ket{\psi}=ac\ket{00}+ad\ket{01}+bc\ket{10}+bd\ket{11}$ for $\abs{a}^2+\abs{b}^2=\abs{c}^2+\abs{d}^2=1$, we have
    \begin{eqnarray}
        \bra{\psi}H\ket{\psi}&=&\alpha(a^*c^*bd+acb^*d^*+a^*d^*bc+adb^*c^*)+\nonumber\\
        &&\beta(-a^*c^*bd-acb^*d^*+a^*d^*bc+adb^*c^*)+\nonumber\\
        &&\gamma(\abs{a}^2\abs{c}^2+\abs{b}^2\abs{d}^2-\abs{a}^2\abs{d}^2-\abs{b}^2\abs{c}^2)\label{eqn:exact}\\
        &=&2\operatorname{Re}[acb^*d^*](\alpha-\beta)+2\operatorname{Re}[adb^*c^*](\alpha+\beta)+\nonumber\\
        &&\gamma(\abs{a}^2\abs{c}^2+\abs{b}^2\abs{d}^2-\abs{a}^2\abs{d}^2-\abs{b}^2\abs{c}^2)\nonumber\\
        &\leq&2\abs{a}\abs{b}\abs{c}\abs{d}(\abs{\alpha+\beta}+\abs{\alpha-\beta})+\nonumber\\
        &&\abs{\gamma}\abs{(\abs{a}^2-\abs{b}^2)(\abs{c}^2-\abs{d}^2)}.\nonumber
    \end{eqnarray}
    where the last inequality follows from the triangle inequality. Let us simplify the notation above by assuming without loss of generality $a,b,c,d\in\reals^+$. We may also assume without loss of generality that $a\geq b$ and $c\geq d$ (since this maximizes the upper bound). Thus:
    \begin{eqnarray*}
        \bra{\psi}H\ket{\psi}&\leq&2{a}{b}{c}{d}(\abs{\alpha+\beta}+\abs{\alpha-\beta})+
        \abs{\gamma}({a}^2-{b}^2)({c}^2-{d}^2).
    \end{eqnarray*}
    Note now for any $\alpha,\beta\in\reals$, $\abs{\alpha+\beta}+\abs{\alpha-\beta}=\abs{\abs{\alpha}+\abs{\beta}}+\abs{\abs{\alpha}-\abs{\beta}}$. Assume first $\abs{\alpha}\geq \abs{\beta}$. Then
    \begin{equation}\label{eqn:bound2}
        \bra{\psi}H\ket{\psi}\leq4{a}{b}{c}{d}\abs{\alpha}+\abs{\gamma}({a}^2-{b}^2)({c}^2-{d}^2).
    \end{equation}
    Let $p=4abcd$ and $q=({a}^2-{b}^2)({c}^2-{d}^2)$. Note $p,q\geq 0$. Also, we claim $p+q\leq 1$; this will imply $\bra{\psi}H\ket{\psi}\leq \max(\abs{\alpha},\abs{\gamma})$. To see this claim, note
    \[
        p+q=(ac+bd)^2-(ad-bc)^2\leq(ac+bd)^2\leq 1,
    \]
    where the last inequality follows from the Cauchy-Schwarz inequality. The case of $\abs{\beta}\geq \abs{\alpha}$ follows analogously with $\abs{\alpha}$ in Equation~(\ref{eqn:bound2}) replaced with $\abs{\beta}$. We hence have $\bra{\psi}H\ket{\psi}\leq \max(\abs{\alpha},\abs{\beta},\abs{\gamma})=\snorm{(\abs{\alpha},\abs{\beta},\abs{\gamma})}$.

    We now show matching lower bounds, i.e. that $\abs{\alpha}$, $\abs{\beta}$, and $\abs{\gamma}$ are attainable. Returning to Equation~(\ref{eqn:exact}):
    \begin{itemize}
        \item For $\abs{\alpha}$: If $\alpha\geq 0$, set $a=b=c=d=1/\sqrt{2}$, and if $\alpha<0$, set $a=b=c=1/\sqrt{2}$ and $d=-1/\sqrt{2}$.
        \item For $\abs{\beta}$: If $\beta\geq 0$, set $a=i/\sqrt{2}$, $c=i/\sqrt{2}$, $b=d=1\sqrt{2}$, and if $\beta<0$, set $a=-i/\sqrt{2}$, $c=i/\sqrt{2}$, $b=d=1\sqrt{2}$.
        \item For $\abs{\gamma}$: If $\gamma\geq 0$, set $a=c=1$ and $b=d=0$. and of $\gamma<0$, set $a=d=1$, $b=c=0$.
    \end{itemize}
\end{proof}

\begin{proof}[Proof of Lemma~\ref{l:opt}]
    Denoting an arbitrary two-qubit state as $\ket{\psi}=a\ket{00}+b\ket{01}+c\ket{10}+d\ket{11}$ for $\abs{a}^2+\abs{b}^2=\abs{c}^2+\abs{d}^2=1$, we have via Equations~(\ref{eqn:1})-(\ref{eqn:3}) that
    \begin{eqnarray*}
        \bra{\psi}X\otimes X\ket{\psi}&=&a^*d+ad^*+b^*c+bc^*,\\
        \bra{\psi}Y\otimes Y\ket{\psi}&=&-a^*d-ad^*+b^*c+bc^*,\\
        \bra{\psi}Z\otimes Z\ket{\psi}&=&\abs{a}^2-\abs{b}^2-\abs{c}^2+\abs{d}^2.
    \end{eqnarray*}
    Thus, $\bra{\psi}H\ket{\psi}$ equals
    \begin{eqnarray}
        &&\alpha(2\operatorname{Re}(ad^*)+2\operatorname{Re}(bc^*))+
        \beta(-2\operatorname{Re}(ad^*)+2\operatorname{Re}(bc^*))+
        \gamma(\abs{a}^2+\abs{d}^2-\abs{b}^2-\abs{c}^2)\nonumber\\
        &=&2\operatorname{Re}[ad^*](\alpha-\beta)+2\operatorname{Re}[bc^*](\alpha+\beta)
        +(\abs{a}^2+\abs{d}^2-\abs{b}^2-\abs{c}^2)\gamma.\nonumber
    \end{eqnarray}
    Observe that since the coefficient of $\gamma$ depends on only \emph{absolute values} of $a,b,c,d$, we can assume without loss of generality that the optimal assignment has $a,b,c,d\geq 0$ and satisfies
    \[
        \bra{\psi}H\ket{\psi}=2ad\abs{\alpha-\beta}+2bc\abs{\alpha+\beta}
        +({a}^2+{d}^2-{b}^2-{c}^2)\gamma.
    \]
    By applying the Arithmetic-Geometric mean inequality, we hence have
    \begin{eqnarray*}
        \bra{\psi}H\ket{\psi}&\leq& (a^2+d^2)\abs{\alpha-\beta}+(b^2+c^2)\abs{\alpha+\beta}+(a^2-b^2-c^2+d^2)\gamma\\
        &=&(a^2+d^2)(\abs{\alpha-\beta}+\gamma)+(b^2+c^2)(\abs{\alpha+\beta}-\gamma)\\
        &\leq&\max(\abs{\alpha-\beta}+\gamma,\abs{\alpha+\beta}-\gamma),
    \end{eqnarray*}
    where the last statement follows since $a^2+b^2+c^2+d^2=1$. The matching lower bound is obtained as follows. To achieve $\abs{\alpha-\beta}+\gamma$ when $\alpha\geq \beta$, set $a=d=1/\sqrt{2}$, and when $\alpha\leq \beta$, set $a=1/\sqrt{2},d=-1/\sqrt{2}$. Similarly, to achieve $\abs{\alpha+\beta}-\gamma$ when $\alpha\geq -\beta$, set $b=c=1/\sqrt{2}$, and when $\alpha\leq -\beta$, set $b=1/\sqrt{2},c=-1/\sqrt{2}$.

\end{proof}

\section{Lemmas and Mathematica code}\label{app:lemmas}
In Section~\ref{sscn:tight} we use the following lemma, which is stated as given in~\cite{BFV12}. Below, $_2F_1(a,b;c;z)$ is the hypergeometric function, defined for $\abs{z}<1$ as
\[
    _2F_1(a,b;c;z)=\sum_{n=0}^\infty\frac{(a)_n(b)_n}{(c)_n}\frac{z^n}{n!},
\]
where for $n\geq 0$, we have Pochhammer symbol $(x)_n=\Gamma(x+n)/\Gamma(x)=x(x+1)\cdots(x+n-1)$ for $\Gamma$ the Gamma function.

\begin{lemma}[Bri\"{e}t, de Oliveira Filho and Vallentin~\cite{BFV12}]\label{l:BFV}
    Let $u,v$ be unit vectors in $\reals^n$ and let $Z\in\reals^{r\times n}$ be a random matrix whose entries are distributed independently according to the standard normal distribution with mean $0$ and variance $1$. Then,
    \begin{equation}\label{eqn:BFV}
        \mathds{E}\left[\frac{Zu}{\enorm{Zu}}\cdot\frac{Zv}{\enorm{Zv}}\right]=\frac{2}{r}\left(\frac{\Gamma((r+1)/2)}{\Gamma(r/2)}\right)^2(u\cdot v)\;_2F_1\left(1/2,1/2;r/2+1;(u\cdot v)^2\right).
    \end{equation}
\end{lemma}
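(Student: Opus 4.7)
The plan is to reduce the expectation to a two-dimensional Gaussian integral that can be evaluated in closed form. Let $t := u\cdot v$ and, writing $A := Zu$, $B := Zv$, observe that by the orthogonal invariance of the standard Gaussian measure on $\reals^{r\times n}$, the pair $(A,B)$ is a Gaussian vector in $\reals^{2r}$ with $A_i,B_i\sim N(0,1)$ and $\text{Cov}(A_i,B_j)=t\,\delta_{ij}$. Equivalently, $B=tA+\sqrt{1-t^2}\,C$ for $A,C\sim N(0,I_r)$ independent, so the expectation depends only on $r$ and $t$.

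To eliminate the inconvenient norms in the denominator, I would invoke the Gamma-function identity $\enorm{y}^{-1}=\pi^{-1/2}\int_0^\infty s^{-1/2}e^{-s\enorm{y}^2}\,ds$ applied to both $\enorm{A}$ and $\enorm{B}$. By Fubini, $\mathds{E}[A\cdot B/(\enorm{A}\enorm{B})]=\pi^{-1}\int_0^\infty\!\!\int_0^\infty (s_1 s_2)^{-1/2}\,\mathds{E}[(A\cdot B)\exp(-s_1\enorm{A}^2-s_2\enorm{B}^2)]\,ds_1\,ds_2$. The inner Gaussian expectation is then evaluated by completing the square: the exponent is a positive-definite quadratic form on $\reals^{2r}$ whose inverse-covariance block matrix has diagonal blocks $(\tfrac{1}{1-t^2}+2s_k)I_r$ and off-diagonal block $-\tfrac{t}{1-t^2}I_r$. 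A standard Gaussian-moment calculation, tracking only the ratio of determinants, produces the closed form $rt\,[1+2(s_1+s_2)+4s_1 s_2(1-t^2)]^{-r/2-1}$.

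The next step is to evaluate the remaining two-dimensional integral. Substituting $s_i=u_i^2/2$ absorbs the $(s_1 s_2)^{-1/2}$ factor up to a constant, leaving $\int_0^\infty\!\!\int_0^\infty [1+u_1^2+u_2^2+u_1^2 u_2^2(1-t^2)]^{-r/2-1}\,du_1\,du_2$. Integrating out $u_1$ via the standard Beta-integral formula $\int_0^\infty (a+bu^2)^{-\nu}\,du=\tfrac{\sqrt{\pi}\,\Gamma(\nu-1/2)}{2\Gamma(\nu)}a^{1/2-\nu}b^{-1/2}$ with $\nu=r/2+1$, followed by the substitution $u_2=\tan\theta$ together with the identity $1+\tan^2\theta(1-t^2)=(1-t^2\sin^2\theta)/\cos^2\theta$, collapses the expression into a constant multiple of $\int_0^{\pi/2}\cos^r\theta\,(1-t^2\sin^2\theta)^{-1/2}\,d\theta$.

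To finish, I would expand $(1-t^2\sin^2\theta)^{-1/2}=\sum_{n\geq 0}\tfrac{(1/2)_n}{n!}(t\sin\theta)^{2n}$ and integrate term by term using $\int_0^{\pi/2}\cos^r\theta\sin^{2n}\theta\,d\theta=\tfrac{\Gamma(n+1/2)\Gamma((r+1)/2)}{2\Gamma(n+r/2+1)}$. Rewriting $\Gamma(n+1/2)=(1/2)_n\sqrt{\pi}$ and $\Gamma(n+r/2+1)=(r/2+1)_n\,\Gamma(r/2+1)$ causes the series to telescope into $\tfrac{\sqrt{\pi}\,\Gamma((r+1)/2)}{2\Gamma(r/2+1)}\,{}_2F_1(1/2,1/2;r/2+1;t^2)$. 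Combining all accumulated prefactors and applying $\Gamma(r/2+1)=(r/2)\Gamma(r/2)$ then produces precisely the coefficient $(2/r)(\Gamma((r+1)/2)/\Gamma(r/2))^2\,t$ in front, as required. The main obstacle is not any individual step—the Gaussian moment, the Beta-integral, and the hypergeometric expansion are all classical—but rather the careful bookkeeping of Gamma-function constants through the cascade of substitutions, so that they telescope cleanly to the stated form.
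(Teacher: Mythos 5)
The paper does not prove this lemma at all: it is imported verbatim from Bri\"{e}t, de Oliveira Filho and Vallentin~\cite{BFV12} and used as a black box in the analysis of the rounding scheme, so there is no in-paper proof to compare against. Your derivation is a correct, self-contained proof, and I checked the key computations: the joint law of $(Zu,Zv)$ as a $2r$-dimensional Gaussian with block covariance $\bigl(\begin{smallmatrix}I & tI\\ tI & I\end{smallmatrix}\bigr)$ is right; the tilted Gaussian moment indeed evaluates to $rt\,[1+2(s_1+s_2)+4s_1s_2(1-t^2)]^{-r/2-1}$ (the cross-covariance entry of the inverse of the modified precision matrix is $\tau/(q_1q_2-\tau^2)$ with $\tau=t/(1-t^2)$, and $q_1q_2-\tau^2=(1+2(s_1+s_2)+4s_1s_2(1-t^2))/(1-t^2)$, so the $(1-t^2)$ factors cancel as you claim); the substitution $s_i=u_i^2/2$ contributes an overall factor $2$, and the Beta-integral plus $u_2=\tan\theta$ reduction to $\int_0^{\pi/2}\cos^r\theta\,(1-t^2\sin^2\theta)^{-1/2}\,d\theta$ is correct; finally the term-by-term integration assembles to $\frac{\sqrt{\pi}\,\Gamma((r+1)/2)}{2\Gamma(r/2+1)}\,{}_2F_1(1/2,1/2;r/2+1;t^2)$, and the prefactors $\frac{2rt}{\pi}\cdot\bigl(\frac{\sqrt{\pi}\,\Gamma((r+1)/2)}{2\Gamma(r/2+1)}\bigr)^2$ collapse to $\frac{2}{r}\bigl(\frac{\Gamma((r+1)/2)}{\Gamma(r/2)}\bigr)^2 t$ exactly as stated. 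Two small points you should make explicit in a polished write-up: Fubini is justified because integrating the absolute value over $(s_1,s_2)$ first returns $\pi\,\abs{A\cdot B}/(\enorm{A}\enorm{B})\leq\pi$; and the decomposition $B=tA+\sqrt{1-t^2}\,C$ degenerates at $\abs{t}=1$, so that boundary case should be handled by continuity (the series converges at $z=1$ since $c-a-b=r/2>0$) or by direct inspection.
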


\paragraph*{Mathematica code.} Below, we give the Mathematica code used to numerically calculate the approximation ratios of Theorem~\ref{thm:alg1}:\\

\texttt{
\begin{tabular}{l}
  g[r\_] := 2/r (Gamma[(r + 1)/2]/Gamma[r/2])\textasciicircum 2 \\
  F[r\_, t\_] := g[r] t Hypergeometric2F1[1/2, 1/2, r/2 + 1, t\textasciicircum 2] \\
  ApproxRatio[r\_] :=
Min[Select[
Table[(1 - F[r, t])/(1 - r t),\\
 \hspace{76mm}\{t, -1, 1/r, 0.01\}], \# > 0 \&]] \\
  ApproxRatio[1] \\
  ApproxRatio[2] \\
  ApproxRatio[3] \\
\end{tabular}
}\\

\noindent The code for the approximation ratios in Section~\ref{ssscn:varyweights} is:\\

\texttt{
\begin{tabular}{l}
ApproxRatio[r\_] :=\\
  Min[Select[
    Flatten[Table[(1 - g 2 ArcSin[t]/Pi)/(1 - r g t), \\
    \hspace{55mm} \{g, -1, 1\}, \{t, -1, 1, 0.01\}]], \# > 0 \&]] \\
ApproxRatio[1]\\
ApproxRatio[2]\\
ApproxRatio[3]\\
\end{tabular}
}

\end{document}